\documentclass[11pt]{article}
\usepackage[T1]{fontenc}
\usepackage[utf8]{inputenc}
\usepackage{graphicx}
\usepackage{complexity}
\usepackage{amsfonts,amssymb,amsmath,amsthm,mathtools,wrapfig}
\usepackage{tabularx,placeins,float}
\usepackage{fullpage}
\usepackage{palatino}
\usepackage{setspace}

\newtheorem{example}{Example}[section]
\newtheorem{theorem}{Theorem}[section]
\newtheorem{proposition}{Proposition}[section]

\newtheorem{lemma}{Lemma}[section]
\newtheorem{claim}{Claim}[section]

\newtheorem{remark}{Remark}[section]

\newcommand{\calL}{\mathcal{L}}
\newcommand{\calD}{\mathcal{D}}
\newcommand{\calR}{\mathcal{R}}
\newcommand{\calJ}{\mathcal{J}}
\newcommand{\calH}{\mathcal{H}}

\newcommand{\reach}{{\sc Reach}}  
\newcommand{\reachmf}{\textrm{\reach}_{M,F}}
\newcommand{\DA}{\sf DA}
\newcommand{\BA}{\sf BA}
\newcommand{\DS}{\mathbf{DS}}

\newcommand{\U}{\sf U}

\usepackage{xcolor}
\usepackage{thmtools, thm-restate}
\usepackage[colorlinks=true,citecolor=blue,linkcolor=blue]{hyperref}

\usepackage{collect}
\usepackage{ifthen}

\begin{document}
\title{On the Reachability Problem on Monoid-Labelled Undirected Graphs\footnote{A preliminary version of the paper appeared in the proceedings of the 22nd International Conference Relational and Algebraic Methods in Computer Science (RAMiCS 2026)}}
\author{Nagashri Krishnakumar\thanks{Indian Institute of Technology Madras, Chennai, India. Email:\{{\tt cs21d004,harshil,jayalal}\}{\tt @cse.iitm.ac.in}} \and Harshil Mittal$^*$ \and Jayalal Sarma$^*$}
\date{\today}
\maketitle

\begin{abstract}

The labelled reachability problem for undirected graphs where the edges are labelled with elements from a monoid $M$ (more generally, other groupoids or magmas) is known to capture space-bounded complexity classes \ComplexityFont{L} and $\NL$. Given a labelled graph $G(V, E)$ (labelled with $\phi ~ \colon E \to M$), $s,t \in V$ and an accepting subset $F \subseteq M$, the problem asks to test whether there is a walk $P$ from $s$ to $t$ in $G$ where $\phi(P) \in F$. When the accepting element is also part of the input, the problem has been studied by Ramaswamy {\em et al} (2019) \cite{RSS19} for the case when the monoid is aperiodic and the case when the monoid is a group. 
Motivated by the success in designing space-bounded algorithms for the graph reachability problem in the undirected case, we study the labelled reachability problem when the accepting set is also fixed. This allows us to reveal finer complexity upper and lower bounds, and dichotomies for the problem using the structure of the underlying monoid and the accepting set. In fact, when the monoid $M$ is a group or belongs to the monoid pseudovariety $\DA$, the previous results imply a deterministic logspace algorithm for the undirected labelled reachability problem over the monoid $M$ for any finite accepting subset as well. We prove the following:
\begin{itemize}
    \item For any finite monoid $M$, there is a deterministic logspace algorithm for solving the above problem when the accepting element is the identity of $M$. In contrast, if the accepting element is an idempotent element, under suitable constraints, the problem is $\NL$-hard.
    \item Considering the more general case of the problem when the accepting set is an arbitrary subset of $M$, we show a deterministic logspace algorithm for a commutative monoid $M$ for all accepting sets $F \subseteq M$.
    \item Further, we prove that for any finite $\calL$-commutative or $\calR$-commutative union-of-groups monoid $M$, there is a deterministic logspace algorithm for all accepting sets $F\subseteq M$. We also design deterministic logspace algorithms for the union-of-groups case that are neither $\calL$-commutative nor $\calR$-commutative, under certain constraints.
    \item For the specific monoids $\BA_2$ and $\U$, which are the building blocks of non-commutative aperiodic monoids, we prove a dichotomy theorem with respect to the accepting subsets by showing that for all subsets $F \subseteq M$, the reachability problem is either $\NL$-complete or is in \ComplexityFont{L}.
    
\end{itemize}
To obtain our results, we critically exploit the relationships between Green's equivalence relations in the union-of-groups monoids and the properties of the product graph (a graph introduced by \cite{RSS19}). 
\end{abstract}

\tableofcontents

\section{Introduction}

One of the fundamental computational problems related to a groupoid $(\mathcal{A},*)$\footnote{also known as magma, $\mathcal{A}$ is a set closed under a binary operation $*$} is that of a \textit{word problem} - given two input words over the alphabet $\mathcal{A}$, test whether they represent the same element. A simplified version of this is the following - given a word over the alphabet $\mathcal{A}$ and a fixed element $\alpha \in \mathcal{A}$, test if the given word represents the element $\alpha$ or not, with respect to the operation. The general form of this, even on richer algebraic structures such as groups, when the groups are finitely presented, is undecidable \cite{Boo58,Nov58}. However, when the group is finite and fixed, the problem is indeed efficiently solvable and has interesting connections to low-level complexity theory. For example, the word problem over any finite non-solvable group exactly captures the class of problems efficiently solvable by logarithmic depth circuits with bounded fan-in $\land, \lor$ and $\lnot$ gates \cite{Bar86}. Several striking finer characterisations of circuit complexity classes using these problems are also known \cite{BT88,CFL83}.

On the other side of complexity theory, the problem of testing reachability in directed graphs is a fundamental algorithmic problem that encapsulates many challenges in space-bounded complexity theory. Given a graph $G(V, E)$ and two special vertices $s$ and $t$, the problem asks to test if there is a path from the vertex $s$ to $t$ in the graph $G$. The problem in its most general form (directed graphs) is complete for the class $\NL$ (the class of problems that are solvable by non-deterministic Turing machines using only $O(\log n)$ space), and for undirected graphs it is known to be complete \cite{Rei04} for the class \ComplexityFont{L} (the class of problems that are solvable by deterministic Turing machines that use only $O(\log n)$ space). The $\NL$ vs \ComplexityFont{L} problem thus amounts to the question of whether it is possible to design a deterministic algorithm that uses only $O(\log n)$ space to test reachability in directed graphs. Various algorithm design techniques have been employed to obtain $O(\log n)$ space-bounded deterministic algorithms for special cases of the directed reachability problem, see \cite{All07}.

Generalising the above two themes - word problem and the reachability problem - 
Ramaswamy {\em et al}~\cite{RSS19} studied the $\mathcal{A}$-\reach  ~problem. 
Fix a finite groupoid $(\mathcal{A},*)$. Consider a directed graph $G(V, E)$ with labelling function $\phi: E \rightarrow \mathcal{A}$. This labelling naturally extends to walks in the graphs as well. Indeed, for a walk $P = (v_1, v_2, \ldots, v_k)$ in the graph $G$,  $\phi(P) = \prod_{i=1}^{k-1} \phi(v_i,v_{i+1})$  \footnote{When $(\mathcal{A},*)$ is non-associative, $\phi(P)$ can instead be defined as the set of all values that $\prod_{i=1}^{k-1} \phi(v_i,v_{i+1})$ takes as one varies over all possible parenthesisations; then, in the definition of $\mathcal{A}$-\textsc{Reach}, the condition $\phi(P)\in F$ changes to $\phi(P)\cap F\neq \{\}$.}
, which we call as the \textit{yield} of the walk $P$. The problem of $\mathcal{A}$-\textsc{Reach} is as follows: given a graph $G(V, E)$ with labelling function $\phi: E \rightarrow \mathcal{A}$, a pair of vertices $s,t \in V$, a non-empty subset $F \subseteq \mathcal{A}$, is there a \textit{walk} from $s$ to $t$ with $\phi(P) \in F$? Indeed, the word problem is a special case, where the graph is a directed path from a source vertex $s$ to the destination vertex $t$, with the labels on the edges to be the given sequence of symbols in $\mathcal{A}^*$ in the same order, and $F= \{\alpha\}$.

Motivated by the fact that the reachability problem is known \cite{Rei04} to be space efficiently solvable for undirected graphs, Ramaswamy \textit{et al}~\cite{RSS19} studies the above problem restricted to undirected graphs. One of the main algebraic structures studied in \cite{RSS19} is monoids. In particular, they studied, for a fixed monoid $M$, the problem \reach$_M$, which given a labelled undirected graph $G(V, E)$ (labelled with $\phi: E \to M$), $s,t \in V$ and an accepting non-empty subset $F \subseteq M$, asks to test whether there is a walk $P$ from $s$ to $t$ in $G$ where $\phi(P) \in F$. It was shown that for every monoid $M$ which is a group, \reach$_M$ is solvable in deterministic logspace. On the other hand, when the monoid $M$ is an aperiodic monoid\footnote{A monoid is said to be aperiodic if for every element $a \in M$, there exists a $k \in \mathbb{N}$ such that $a^k=a^{k+1}$. The number $k$ is called the index of the element $a$ in $M$.}, the following dichotomy was shown: \reach$_M$ is either in deterministic logspace or is $\NL$-complete. This dichotomy theorem uses the fact that the accepting subset is a part of the input.

Since there are many algorithmic tools~\cite{Rei04} to study reachability in undirected graphs, it is conceivable that complexity theoretic questions such as the $\NL$ vs \ComplexityFont{L} problem may be amenable to those techniques applied to the labelled generalisation of the undirected reachability problem. Given this, it is important to study the problem even for various accepting subsets, not just restricted to the dichotomy theorem stated above.\\[-3mm]

\noindent{\bf Our Results}:
In this paper, we study the complexity of the labelled reachability problem where the monoid $M$ and the final accepting set $F$ are fixed, and for different choices of $M$ and $F$. More precisely, we define the following problem, which we will call the \textit{monoid reachability problem}. Fix a finite monoid $M$ and a non-empty subset $F \subseteq M$, define:
$$\textrm{\reach}_{M,F} = \left\{ (G(V,E),\phi,s,t) : \begin{array}{l}
\textrm{$G$ is an undirected graph, $\phi : E \to M$}\\
\textrm{is a labelling function, and $\exists$ a walk $P$}\\
\textrm{from $s$ to $t$ in $G$ such that $\phi(P) \in F$}
\end{array} \right\}
$$

\noindent{\bf Reachability with Identity as the Accepting Element}:
In the first version of the problem, we fix $F$ to be the monoid identity. Notice that, when the monoid is a group, it follows from \cite{RSS19} that there is a deterministic logspace algorithm for this problem. 

\label{Logspace algo when accepting set is inside unit group for union-of-groups}

We begin by showing a logspace algorithm for any fixed finite monoid with accepting set being the monoid identity (Section~\ref{subsec: Logspace Algorithm for all Finite Monoids}). 
\begin{restatable}[]{theorem}{monoididentityalgo} 
\label{thm: logspace algo for accepting identity for any finite monoid}    
Consider any fixed finite monoid $M$. Let $\tilde{G}$ denote the unit group of $M$. Then, $\reachmf$ can be solved using a deterministic logspace algorithm for all accepting sets $F\subseteq \tilde{G}$  (and so, in particular, when $F=\{id\}$, where $id$ denotes the identity of $M$).
\end{restatable}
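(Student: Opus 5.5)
The plan is to reduce to the group case of \cite{RSS19} by showing that a walk whose yield is a unit can only use edges labelled by units.

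\textbf{Key algebraic fact.} In a finite monoid $M$, if $ab\in\tilde{G}$ then both $a\in\tilde{G}$ and $b\in\tilde{G}$.

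To prove it, first show that one-sided inverses are two-sided. Suppose $ac=id$. Then the map $x\mapsto cx$ is injective, because $a(cx)=x$. Since $M$ is finite, this map is also surjective, so there is $y$ with $cy=id$. Then $a=a(cy)=(ac)y=y$, hence $ca=id$ and $a\in\tilde{G}$. The symmetric argument handles $ca=id$.

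Now let $ab=u\in\tilde{G}$. Then $a(bu^{-1})=id$, so $a$ has a right inverse and is a unit by the above. Likewise $(u^{-1}a)b=id$, so $b$ has a left inverse and is a unit. By induction on length, if a walk $P=(v_1,\dots,v_k)$ has $\phi(P)=\prod_i\phi(v_i,v_{i+1})\in\tilde{G}$, then every edge label on $P$ lies in $\tilde{G}$.

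\textbf{Reduction.} Given $(G,\phi,s,t)$, construct $G'$ by deleting every edge $e$ with $\phi(e)\notin\tilde{G}$. Membership in $\tilde{G}$ is decided by a fixed finite table, since $M$ is fixed, so $G'$ is logspace-computable (indeed computable edge by edge on the fly).

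We claim that $(G,\phi,s,t)\in\reachmf$ if and only if $(G',\phi|_{E'},s,t)$ has a walk from $s$ to $t$ with yield in $F$.
\begin{itemize}
\item The backward direction is trivial, since $G'$ is a subgraph of $G$.
\item For the forward direction, any witnessing walk has yield in $F\subseteq\tilde{G}$. By the key fact, all its edges survive in $G'$.
\end{itemize}

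\textbf{Conclusion.} $G'$ is an undirected graph labelled by elements of the fixed finite group $\tilde{G}$, and $F\subseteq\tilde{G}$. We then invoke the result of \cite{RSS19} that \reach$_{\tilde{G}}$ is in deterministic logspace for any accepting subset; this covers our fixed $F$. Composing with the logspace reduction (logspace is closed under composition) gives the algorithm. The case $F=\{id\}$ is a special instance.

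\textbf{Main obstacle.} The only subtle point is the algebraic fact that one-sided invertibility implies invertibility, and this needs the finiteness of $M$. One must also make sure that traversing an undirected edge in either direction gives the same label. That is exactly the undirected group setting handled by \cite{RSS19}, so no extra work is needed there.
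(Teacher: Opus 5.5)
Your proposal is correct and matches the paper's proof: delete the edges whose labels are not units and run the group algorithm of \cite{RSS19}. The deletion is safe for the same reason in both arguments, namely that in a finite monoid a one-sided inverse is two-sided, shown by a pigeonhole argument on a multiplication map. Your general lemma that $ab\in\tilde{G}$ forces $a,b\in\tilde{G}$, applied inductively along the walk, is a slightly cleaner packaging than the paper's contradiction at the first non-unit edge, but the idea is the same.
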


The above theorem establishes deterministic logspace algorithms for all finite monoids when the accepting element is the identity by using the embedding of the monoid into the transformation monoid of appropriate order. 
However, we note that for special classes of monoids, we can obtain such algorithms using the specific structure of the monoids, using different arguments, which may be of independent interest. We include them in Section \ref{subsec: Logspace Algorithm for all Finite Monoids}.

Complementing the above, we show that there are non-commutative monoids where the problem is $\NL$-hard even when the accepting element is an idempotent element (notice that the identity element is always an idempotent element) (Section~\ref{subsec: NL-hardness with an Idempotent as the Accepting Element}). An absorber in a monoid is an element $a \in M$ such that for every $b$, $ba = a$ and $ab = a$.

\begin{restatable}[]{theorem}{specialidempotent}
\label{idempotent hardness with certain conditions}
Consider any fixed finite monoid $M$ with absorber $0$. Let $\mu\neq 0$ be any idempotent element of $M$ that can be factored as $\mu = a\cdot b$ for some $a, b\in M$ such that $a^2=0$ or $b^2=0$. Then, $\reachmf$ with $F=\{\mu\}$ is $\NL$-hard.
\end{restatable}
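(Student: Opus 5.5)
We reduce directed $s$-$t$ reachability, which is $\NL$-complete, to $\reachmf$ with $F=\{\mu\}$. By symmetry (the reversed-product argument is identical) we assume $a^2=0$.

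\textbf{Key consequence of the hypotheses.} Since $\mu=ab$ is idempotent, $\mu=(ab)^k$ for every $k\ge 1$, and $\mu\ne 0$. The product $ab\,ab$ is nonzero even though $a^2=0$. So in a yield, each $a$ must be followed by $b$ and not by another $a$, or the walk must otherwise avoid the absorber. The idea is that the absorber $0$ kills every walk that traverses an edge in the wrong direction.

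\textbf{Construction.} Let $H$ be a directed graph with vertices $s,t$. For each vertex $v$, create two copies $v_{\mathrm{in}}$ and $v_{\mathrm{out}}$.
\begin{itemize}
\item For each vertex $v$, add an undirected edge $v_{\mathrm{in}}v_{\mathrm{out}}$ labelled $a$.
\item For each directed edge $(u,v)$ of $H$, add an undirected edge $u_{\mathrm{out}}v_{\mathrm{in}}$ labelled $b$.
\end{itemize}
The source is $s_{\mathrm{in}}$. For the target, attach a new vertex $t'$ to $t_{\mathrm{out}}$ by an edge labelled $b$, and use $t'$ as the target.

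\textbf{Completeness.} A directed path $s=v_0\to\dots\to v_k=t$ gives the walk
$$s_{\mathrm{in}},\,s_{\mathrm{out}},\,(v_1)_{\mathrm{in}},\,(v_1)_{\mathrm{out}},\,\dots,\,t_{\mathrm{out}},\,t'$$
with yield $(ab)^{k+1}=\mu$.

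\textbf{Soundness.} The graph is bipartite in a structured way: every walk from $s_{\mathrm{in}}$ alternates between $a$-edges and $b$-edges, because each vertex copy $v_{\mathrm{in}}$ or $v_{\mathrm{out}}$ has exactly one incident $a$-edge and otherwise only $b$-edges. A walk that reuses an $a$-edge immediately, going back and forth, contains the factor $aa=0$, and so does any walk that arrives somewhere via an $a$-edge and leaves via the same $a$-edge. The remaining danger is consecutive $b$-edges. Going back along a $b$-edge from $v_{\mathrm{in}}$ to some $u_{\mathrm{out}}$ yields $bb$, and $b^2$ is not controlled.

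This is the main obstacle. Reversed traversals of the form $\dots a\,b\,b\,a\dots$ are not obviously zero, and nothing prevents a $bb$ factor from simulating a backward arc.

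\textbf{Fixing the obstacle.} I would refine the gadget so that the yield forces strict alternation of $a$ and $b$, i.e.\ so that a $bb$ factor cannot occur. Replace each directed edge $(u,v)$ by a path $u_{\mathrm{out}} - w_{uv} - v_{\mathrm{in}}$ with labels $b$ and then $a$, and replace each vertex gadget by a single $b$-edge. Now the forward direction reads $\dots b\,a\,b\,a\dots$, which is fine. Traversing an arc gadget backwards reads $a$ then $b$, so the sequence at the junction becomes $\dots a\,a\dots$ or $\dots b\,b\dots$.

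To make every reversal create an $aa$ factor, I would orient the gadgets so that each vertex $x$ has all incident edges labelled $a$ on one side and $b$ on the other. Concretely, I would design a layered graph in which each vertex sits in a layer and the edge labels alternate by layer parity. This is the standard trick of reducing from layered DAG reachability. Use a layered graph $H$ with layers $L_0,\dots,L_{2m}$, where edges from an even layer to the next odd layer get label $a$ and edges from an odd layer to the next even layer get label $b$. Moving backward from an odd layer to an even layer then reads $a$ right after an $a$ (the edge just used to enter), giving $aa=0$.

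With this setup, every walk with nonzero yield is monotone forward in layers, and its yield is $(ab)^m=\mu$ exactly when it goes from $s\in L_0$ to $t\in L_{2m}$. Since reachability in layered DAGs is $\NL$-complete and the construction is logspace, $\NL$-hardness follows.

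The main work is verifying that every nonmonotone walk contains $aa$. The case $b^2=0$ is handled symmetrically by swapping which parity carries which label.
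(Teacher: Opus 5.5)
\textbf{The layered fix does not work.} Your diagnosis of the obstacle is correct, but the parity labelling does not remove it; it moves it elsewhere. Only two of the four kinds of turn read $a\cdot a$: turning back at an odd-layer vertex entered from the previous layer, and turning forward at an even-layer vertex entered from the next layer. The other two kinds read $b\cdot b$: turning back at an even-layer vertex entered from $L_{2i-1}$, and turning forward at an odd-layer vertex entered from the next layer. So the claim that every non-monotone walk contains the factor $aa$ is false, and the sole assumption $a^2=0$ says nothing about $b^2$. Concretely, take $M=\U$ with $a=\beta$, $b=\alpha$, $\mu=\beta\alpha$. The hypotheses hold, since $\beta^2=0$ and $\beta\alpha$ is a non-zero idempotent, while $b^2=b$. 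Let $L_0=\{s\}$, $L_1=\{x,x'\}$, $L_2=\{y,y'\}$ with arcs $s\to x$, $x\to y$, $x'\to y$, $x'\to y'$. Then $y'$ is not reachable from $s$. Yet the walk $s,x,y,x',y'$ reads $a,b,b,b$, with yield $\beta\alpha\alpha\alpha=\beta\alpha=\mu$. So your reduction maps a no-instance to a yes-instance.

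\textbf{Comparison with the paper, and the missing idea.} The paper subdivides every arc into two edges: labels $a,b$ when $b^2=0$, and labels $b,a$ plus pendant edges $s's$, $tt'$ labelled $a$, $b$ when $a^2=0$. It then analyses the first wrong edge and removes an immediate backtrack over the non-nilpotent edge by bypassing it. That step has the same hole you found. The bypass deletes a factor $a^2$ (resp.\ $b^2$) from the yield, and a backtrack followed by a backward arc falls under neither of its cases. In the same $\U$ setting, take arcs $s\to x$, $x\to y$, $z\to x$, $z\to t$, so there is no directed $s$-$t$ path. The walk $s',s,m_{sx},x,m_{xy},x,m_{zx},z,m_{zt},t,t'$ reads $a,b,a,b,b,a,b,b,a,b$ and has yield $\beta\alpha$.

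What is missing in both arguments is this: every backward crossing of an arc gadget must read the forbidden product on two consecutive edges, so that no earlier backtrack can neutralise it. Replace each arc $(u,v)$ by a path $u,m_1,m_2,v$ with labels $a$, $ba$, $b$.
A forward crossing contributes $a\cdot ba\cdot b=\mu^2=\mu$, so a directed $s$-$t$ path with $k\ge 1$ arcs gives yield $\mu$.
Between consecutive visits to original vertices, a walk stays inside one gadget. So it either returns to the vertex it left or crosses the gadget.
A crossing from $v$ to $u$ must end with the steps $m_2\to m_1\to u$, reading $ba\cdot a=b\cdot a^2=0$.
Hence a walk with yield $\mu\neq 0$ projects to a directed $s$-$t$ walk, with no assumption on $b^2$. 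The case $b^2=0$ then follows by applying this to the opposite monoid and swapping $s$ and $t$.
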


\noindent Note that there are explicit monoids (example, $\BA_2$, $\U$) where the above property is true for an idempotent element.\\[-3mm]

\noindent{\bf Reachability with any subset as the Accepting Subset}:
We now turn to the most general case, where the accepting set is an arbitrary $F \subseteq M$. It follows easily from the results of \cite{RSS19} that there are deterministic algorithms for any accepting subset $F \subseteq M$, for the problem $\reachmf$ in the cases when (a) the monoid $M$ is a group, or (b) when the monoid is the class $\DA$ (which is a monoid pseudovariety contained in the class of aperiodic monoids). While there are $M$ and $F$ for which the problem $\reachmf$ is $\NL$-hard, we extend our understanding to several classes of monoids.

We begin with commutative monoids. We show a logspace algorithm for an arbitrary accepting set $F \subseteq M$, when the monoid $M$ is commutative (Section~\ref{subsec: Logspace Algorithms for Commutative Monoids}). 

\begin{restatable}[]{theorem}{commutativeArbitraryAcceptance} 
\label{thm: logspace algo commutative monoids}
Consider any fixed finite commutative monoid $M$. Then, $\reachmf$ can be solved using a deterministic logspace algorithm for all accepting sets $F\subseteq M$.    
\end{restatable}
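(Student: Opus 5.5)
Since $M$ is commutative, the yield of a walk depends only on the multiset of edge labels it traverses, and in fact only on how many times each edge is used. Because the graph is undirected, any edge $e$ reachable from the walk can be traversed back and forth, which adds $\phi(e)^2$ to the product. The plan is to reduce the problem to computing, in logspace, a finite "signature" that determines the set of achievable yields, and then test whether this set meets $F$.

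\textbf{Setup.} Fix $n_0 = |M|$. For each $a \in M$ the powers $a^k$ are eventually periodic with index and period bounded by $|M|$. So the value $a^k$ depends only on $\min(k, c)$ together with $k$ modulo some fixed $p$; we can take $c = |M|$ and $p = |M|!$. Any walk from $s$ to $t$ uses each edge $e$ some number of times $k_e \ge 0$, and its yield is $\prod_e \phi(e)^{k_e}$. Rather than tracking edges, I would group the contributions by label. For each label $a \in M$, the contribution is $a^{K_a}$, where $K_a$ is the total number of traversals of edges labelled $a$. Thus the yield is determined by the vector $(\min(K_a,c),\ K_a \bmod p)_{a\in M}$, a constant-size object.

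\textbf{Characterising realisable count vectors.} A multiset of edge traversals $(k_e)$ comes from an $s$-$t$ walk exactly when two conditions hold (Euler's theorem). First, the edges with $k_e>0$ form a connected subgraph containing $s$ and $t$. Second, the degree parity condition holds: every vertex has even $\sum_{e\ni v} k_e$, except $s,t$, which are odd (all even if $s=t$). Since we may always add $2$ to any used edge, only the following matters:
\begin{itemize}
\item the parity pattern of the $k_e$, which must be an $s$-$t$ $T$-join inside the used subgraph;
\item for each label, whether enough traversals can be added to reach any desired count modulo $p$ and any desired threshold level.
\end{itemize}
Adding $2$ to a used edge labelled $a$ increases $K_a$ by $2$. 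Adding a traversal of an odd cycle shifts the parities.

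I would therefore enumerate a constant-size "guess": a set $S \subseteq M$ of labels that will be used, together with a target residue vector. Realisability then becomes the following question. Does there exist a connected subgraph $H$ containing $s,t$, whose edges use exactly the labels in $S$, with an $s$-$t$ parity structure yielding prescribed parities of the label counts? Since the $K_a$ can be inflated by even amounts freely, the achievable residues mod $p$ are determined by $S$ and the parity vector $(K_a \bmod 2)_{a\in S}$. This is because $p$'s odd part is reached by even increments, and large counts exceed $c$.

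The key logspace subroutine is to decide, for a fixed $S$ and parity vector $\pi\in\{0,1\}^S$, the following: is there a walk from $s$ to $t$ using only edges with labels in $S$, touching at least one edge of each label in $S$, whose traversal count of label-$a$ edges has parity $\pi_a$?

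\textbf{Reduction to undirected reachability.} I would solve this with a product graph over the group $\mathbb{Z}_2^{S}$. Restrict to edges with labels in $S$. Each vertex becomes $(v,x)$ with $x\in\mathbb{Z}_2^S$, and an edge $(u,v)$ labelled $a$ connects $(u,x)$ to $(v,x+e_a)$. This product graph is undirected, so Reingold's algorithm \cite{Rei04} decides reachability from $(s,0)$ to $(t,\pi)$. This is exactly the group case of \cite{RSS19}.

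The requirement that every label in $S$ is actually used I would handle as follows. For each $a\in S$, pick (by enumeration in logspace) an edge $e_a=(u_a,v_a)$ labelled $a$ in the component of $s$ within the $S$-restricted graph. A walk can detour to $e_a$ and back across it twice, which adds an even count and preserves parity. So it suffices to check that such an edge exists in $s$'s component; this is again an undirected reachability query.

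Finally, the algorithm loops over all constant-many $(S,\pi)$. For each realisable pair, it computes the set of yields $\prod_{a\in S} a^{K_a}$ over all $K_a\ge 1$ with $K_a\equiv \pi_a \pmod 2$. This set is constant-size and computable by brute force, taking $K_a$ up to $2p + c$. The algorithm accepts if this set meets $F$.

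\textbf{Main obstacle.} The delicate step is correctness of the claim that the achievable yields are exactly those determined by $(S,\pi)$. In one direction, we must check that any count vector with $K_a\ge1$ and the right parities can be realised, which needs the detour argument. Here, padding of $+2$ on any used edge realises arbitrarily large counts of fixed parity. In the other direction, the parity of $K_a$ must be an invariant that the product graph faithfully captures. Since $a^{K_a}$ is determined by $K_a$'s threshold and residue, and every residue of the right parity eventually appears, this should work.
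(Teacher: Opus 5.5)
There is a genuine gap, and it sits in the step you flag as the main obstacle. Padding by $+2$ on a used edge can only \emph{increase} counts. So from one realisable walk with parity vector $\pi$ you get every \emph{sufficiently large} count vector with that parity, but you learn nothing about which \emph{small} counts are realisable. Since $M$ need not be a group, $a^{K_a}$ genuinely depends on $\min(K_a,c)$, so small values matter. Your claim that every $(K_a)_{a\in S}$ with $K_a\ge 1$ and $K_a\equiv\pi_a \pmod 2$ is realisable is therefore false.

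For instance, let $M=\{1,a,0\}$ with $a^2=0$ and $F=\{a\}$, and let $G$ be the path $s - x - y - t$ with all three edges labelled $a$. Every $s$-$t$ walk crosses each edge an odd number of times, so $K_a\ge 3$ and the yield is always $0$; the answer is no. Yet $(S,\pi)=(\{a\},1)$ passes your $\mathbb{Z}_2$ product-graph test, and your brute-force set $\{a^K : K\ge 1 \text{ odd}\}$ contains $a$, so you accept. You also cannot repair this by considering only large counts: with a single edge $s - t$ labelled $a$, the answer is yes precisely because $K_a=1$.

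The missing idea is a way to enforce \emph{exact} small exponents, jointly across labels; this information is not a function of $(S,\pi)$. The paper fixes target exponents $n_i'$ and observes that only $O(1)$ edge occurrences need to be pinned down. These are the occurrences of labels whose target exponent is below the index $k_i$, plus $k_i$ occurrences of each remaining label. It guesses those edges in the order they appear in the walk, which gives polynomially many guesses. Between consecutive guessed edges, the below-threshold labels are deleted, and only residues of the remaining labels matter. These residues are checked by group-labelled reachability over $\mathbb{Z}_{j_{i_1}-k_{i_1}}\times\cdots\times\mathbb{Z}_{j_{i_r}-k_{i_r}}$. Your touched-set/parity product-graph test could plausibly serve as that segment check, since once each such label's exponent is past its index, your padding argument shows only $S$ and $\pi$ matter. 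But it has to be run on the segments between a guessed skeleton of small-count edges, not on the whole walk.
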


Next, we generalise the case of groups to the union-of-groups.
\noindent We describe logspace algorithms for an arbitrary accepting subset $F \subseteq M$, when the monoid $M$ has additional properties as a union-of-groups. A monoid $M$ is said to be $\calR$-commutative (respectively $\calL$-commutative) if $a\cdot b$ and $b\cdot a$ belong to the same $\mathcal{R}$-class (respectively $\calL$-class) for all $a,b\in M$ (Section~\ref{subsubsec: Over L- (or R-) Commutative Union-of-Groups Monoids}).

\begin{restatable}[]{theorem}{firstlcommutativealgo} 
\label{Logspace algo when R-classes = H-classes}
Consider any fixed finite $\calL$-commutative or $\calR$-commutative union-of-groups monoid $M$. Then, $\reachmf$ can be solved using a deterministic logspace algorithm for all accepting sets $F\subseteq M$.
\end{restatable}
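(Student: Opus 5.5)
The plan is to reduce $\reachmf$ to a constant number of undirected reachability instances on product graphs, and then apply Reingold's algorithm~\cite{Rei04}. The reduction uses two structural facts. First, a finite union-of-groups monoid is a semilattice $Y$ of completely simple semigroups. Concretely, $\mathcal{J}=\mathcal{D}$ is a congruence, $M/\mathcal{J}$ is a semilattice, and each $\mathcal{J}$-class $J_\alpha$ is isomorphic to a Rees matrix semigroup $\mathcal{M}[G_\alpha; I_\alpha, \Lambda_\alpha; P_\alpha]$. Second, the extra hypothesis collapses one of the two index sets. In the $\calL$-commutative case, take idempotents $e,f$ in the same $\mathcal{J}$-class. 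Then $ef\in R_e\cap L_f$ and $fe\in R_f\cap L_e$, and $ef\,\calL\, fe$ forces $L_e=L_f$. So each $J_\alpha$ has a single $\calL$-class, i.e.\ it is a left group $G_\alpha\times I_\alpha$. The $\calR$-commutative case is symmetric, giving right groups; I treat only the $\calL$ case and obtain the other by reversing walks.

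\textbf{Step 1: fixing the $\mathcal{J}$-class.} Let $\pi:M\to Y$ be the quotient map. For a walk $P$, $\pi(\phi(P))$ is the meet of $\pi$ of the labels used by $P$. Since the graph is undirected, any walk may detour along any edge of its connected component and come back. Hence the achievable $\mathcal{J}$-classes are governed by which label types are reachable from the component of $s$ and $t$. The algorithm loops over all $\alpha\in Y$ and all $f\in F\cap J_\alpha$; there are constantly many pairs. For each pair it asks whether some $s$--$t$ walk has yield exactly $f$. Only edges with $\pi(\text{label})\ge\alpha$ may be used, and at least one edge from each class "generating" $\alpha$ must be used. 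This is a constant-size Boolean combination, which I handle by also guessing a constant-size set of "witness label types" that must appear.

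\textbf{Step 2: coordinatising products inside $J_\alpha$.} Every element $m$ with $\pi(m)\ge\alpha$ acts on $J_\alpha$ by left and right multiplication. For $x=(g,i)\in J_\alpha$, right multiplication gives $x\cdot m=(g\,\rho_m(i), \sigma_m(i))$ for suitable maps $\rho_m: I_\alpha\to G_\alpha$ and $\sigma_m:I_\alpha\to I_\alpha$. Once the walk's running product has entered $J_\alpha$, the remainder of the walk therefore acts as a group-valued cocycle over the finite index set $I_\alpha$. Before entering $J_\alpha$, the running product lives in higher $\mathcal{J}$-classes, but only boundedly much information is needed: which higher class it sits in. I would build the product graph of~\cite{RSS19} with vertex set $V\times(\{\text{pre}\}\cup I_\alpha)\times G_\alpha$. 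An edge $(u,v)$ with label $m$ moves $(u,i,g)$ to $(v,\sigma_m(i), g\rho_m(i))$. The crucial point is that the reversed traversal must also be an edge of this product graph, so that the graph is undirected and Reingold applies. The target is then to test whether $(t,i_f,g_f)$ is reachable, where $f=(g_f,i_f)$.

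\textbf{Main obstacle.} The delicate step is the symmetry of the product graph. Traversing an edge backwards is not the monoid inverse, since $m$ need not be a unit; what must be shown is that walks $u\to v\to u$ can be used to "cancel" effects. My first attempt is to exploit the group structure of $H$-classes. For $m$ with $\pi(m)\ge\alpha$, the element $m^{\omega}$ is idempotent, and $m^{k}$ for a suitable $k$ acts like $m^{-1}$ on $J_\alpha$ once we are inside $J_\alpha$, using $\calL$-commutativity to keep the index coordinate under control. The walk $u\to v\to u\to\cdots$ repeated $2k-1$ times realises $m\cdot(m^{-1})$-type corrections. This shows that the reachability relation in the directed product graph restricted to $J_\alpha$-states is symmetric, in the style of the group case of~\cite{RSS19}. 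Symmetric directed reachability can then be decided in logspace by Reingold. The entry transition from "pre" into $J_\alpha$ is a one-way step but occurs once. I would therefore handle it by enumerating, in logspace, the edge at which the product first drops into $J_\alpha$, and checking the prefix by a smaller instance over higher classes, which by induction on $Y$ is again in \ComplexityFont{L}.
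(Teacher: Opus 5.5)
Your argument is correct in outline and has the same architecture as the paper's: the running product descends monotonically through a constant number of blocks of the product graph, the crossing edges are guessed, and inside a block reachability is ``undirected-like''. However, you cut the blocks differently, and so you use the hypothesis in the opposite lemma.

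The paper works with the fine blocks $V\times H$, where $H$ is an $\calH$-class. There the Eulerian property of $G'[V\times H]$ is pure group arithmetic in $H$, and in fact holds in every union-of-groups monoid. $\calL$-commutativity, in the form ``$\calR$-classes equal $\calH$-classes'', is spent only on showing that a path never re-enters a block. The paper then finishes with Eulerian reachability \cite{RTV06}.

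You work with the coarse blocks $V\times J_\alpha$. Non-re-entry is automatic there, since the $\calJ$-class of the running product can only go down in any finite monoid. You use the hypothesis, through the left-group structure of $J_\alpha$, to make right multiplication by every label $m$ with $\pi(m)\ge\alpha$ a permutation of $J_\alpha$. This gives symmetric reachability, so Reingold on the underlying undirected graph suffices. Your recursion over $Y$ is a sequential form of the paper's simultaneous guess of all crossing edges.

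The paper's route is more elementary: it needs no Clifford decomposition or Rees coordinates. Yours localises the obstruction for general union-of-groups monoids in one place. When $J_\alpha$ has several $\calL$-classes, right multiplication collapses $\calL$-indices and symmetry inside the block fails.

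Two points in your write-up need tightening.

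First, the ``pre'' state cannot record only the higher $\calJ$-class. The entry state $x\cdot\phi(e)$ depends on the exact prefix yield $x$. Your recursion must therefore enumerate $x$ (constantly many choices) and ask recursively for an $s$--$u$ walk of yield exactly $x$. This also makes the ``witness label types'' of Step 1 unnecessary.

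Second, the permutation property should be proved rather than attributed vaguely to $\calL$-commutativity. For $x\in J_\alpha$ and $\pi(m)\ge\alpha$ we have $xm\le_{\calR}x$ and $xm\mathrel{\calJ}x$. Hence $xm\mathrel{\calR}x$ by stability of finite monoids, so your $\sigma_m$ is the identity. Since the $\calR$-classes of $J_\alpha$ are $\calH$-classes, the map $x\mapsto xm=x\cdot(em)$ is a right translation in the group $H_x$, where $e$ is the idempotent of $H_x$ and $em\in H_x$. In particular, $G'[V\times J_\alpha]$ is just the disjoint union of the paper's blocks $G'[V\times H]$ with $H\subseteq J_\alpha$.
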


We remark that the above class also contains the class of Clifford monoids. For union-of-groups monoids, the property of $\calL$-commutativity can also be equivalently stated as the fact that $\calR$-classes and $\calH$-classes coincide. Motivated by specific union-of-groups monoids which are not $\calL$-commutative nor $\calR$-commutative, we extend the above study further to relax the $\calL$-commutativity.

For example, consider the graph $2P_2$ (i.e., disjoint union of two paths, each on two vertices). Its endomorphism monoid $End(2P_2)$ consists of all homomorphisms from $2P_2$ to itself, with function composition as the monoid operation. Note that $End(2P_2)$ is a union-of-groups monoid\footnote{More generally, \cite{PKG16} showed that endomorphism monoids of complements of even cycles are union-of-groups.}, and its $\mathcal{R}$-classes and $\mathcal{H}$-classes do not coincide 
(see Figure~\ref{End(2P2) figure}).

\begin{figure}[H]
\centering
\includegraphics[scale=0.7]{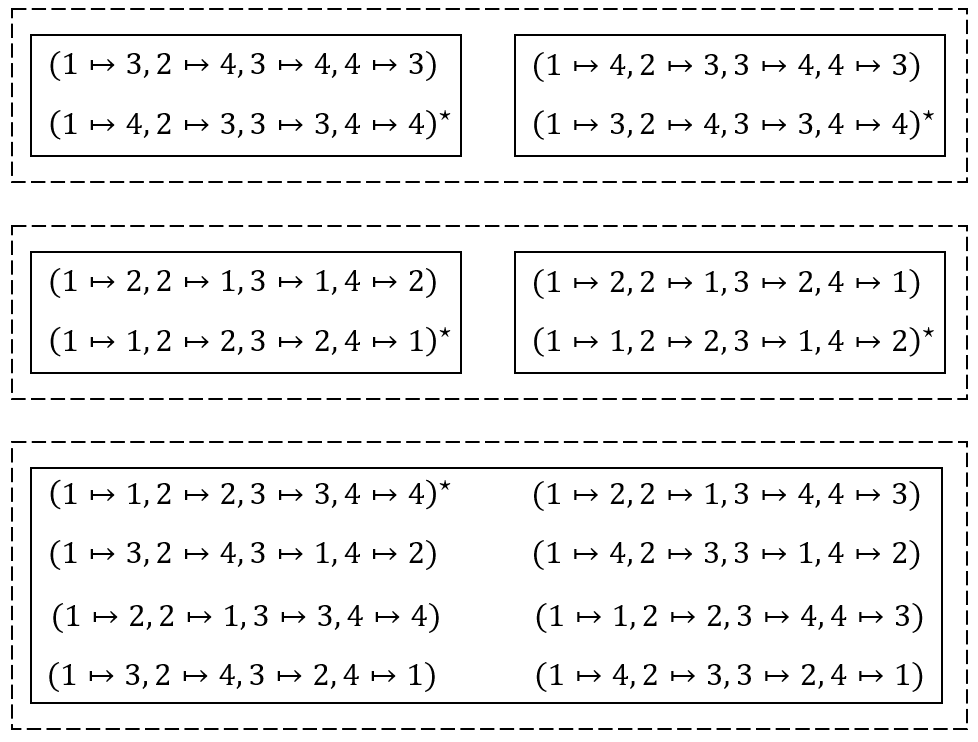}
\caption{This figure shows $\mathcal{R}$-classes and $\mathcal{H}$-classes of the union-of-groups monoid $End(2P_2)$ marked with dashed and solid boxes respectively. The idempotent elements are marked with $*$. The bottom box is the unit group. Note that $\mathcal{R}$-classes do not coincide with $\mathcal{H}$-classes and so, $End(2P_2)$ is not $\calL$-commutative. Also, observe that the two $\mathcal{R}$-classes outside the unit group are right ideals.}
\label{End(2P2) figure}
\end{figure}

However, in this case, the two $\mathcal{R}$-classes outside the unit group are right ideals. We prove a version of the previous theorem to handle this example as well when the accepting set $F$ is either of these two $\mathcal{R}$-classes (Section~\ref{subsubsec:When R-classes other than Unit Group are Right Ideals}). More generally:

\begin{restatable}[]{theorem}{firstrightidealalgo}
\label{R-classes right ideal thm} Consider any fixed finite union-of-groups monoid $M$ such that all $\mathcal{R}$-classes other than the unit group (say $\tilde{G}$)\footnote{The $\mathcal{R}$-class containing unit group $\tilde{G}$ has only one $\mathcal{H}$-class, namely $\tilde{G}$. This is because for any element $h$ such that $ h~\mathcal {R}~id$ (i.e., $h$ has a right inverse), the map $m\mapsto m\cdot h$ from $M$ to $M$ is injective; thus, it is also surjective and so, $h$ has a left inverse too.} are right ideals. Then,  $\reachmf$ can be solved using a deterministic logspace algorithm when the accepting set $F$ is any of the $\mathcal{R}$-classes of $M$ other than its unit group $\tilde{G}$. 
\end{restatable}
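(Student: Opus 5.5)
The plan is to use the right-ideal hypothesis to split every accepted walk into a unit-group prefix, one ``entering'' edge, and an unconstrained suffix. Each piece can then be tested in logspace using Theorem~\ref{thm: logspace algo for accepting identity for any finite monoid} (or the group result of \cite{RSS19}) and Reingold's undirected reachability \cite{Rei04}.

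\textbf{Step 1: structure of prefix yields.} I first record two algebraic facts.

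First, in a finite monoid, if $xy\in\tilde G$ then $x,y\in\tilde G$. Indeed, $xy$ invertible gives $x$ a right inverse, and then $x$ is a unit by the injectivity/surjectivity argument in the footnote to the statement. Symmetrically, $y$ has a left inverse, so $y$ is a unit too.

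Second, let $R'$ be any $\mathcal R$-class other than $\tilde G$. By hypothesis $R'M\subseteq R'$. Since $\mathcal R$-classes are disjoint, once a prefix yield of a walk lies in $R'$, every longer prefix yield stays in $R'$.

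Now let $P=(v_1,\dots,v_k)$ be any walk. By the first fact, the prefix yields lie in $\tilde G$ exactly as long as all edge labels so far are units. Let $(v_i,v_{i+1})$ be the first edge whose label $a$ is not a unit. Then $\phi(v_1\dots v_i)=g\in\tilde G$ and $ga\notin\tilde G$. So $ga$ lies in some non-unit $\mathcal R$-class, and by the second fact $\phi(P)$ lies in that same class. Consequently $\phi(P)\in R$ if and only if $g\cdot a\in R$. (The empty or all-unit walks have yield in $\tilde G$, which is disjoint from $R$.)

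\textbf{Step 2: characterisation.} Let $G_U$ be the subgraph of unit-labelled edges. Then $(G,\phi,s,t)\in\reachmf$ with $F=R$ if and only if there exist an edge $\{u,v\}$, traversed from $u$ to $v$, with label $a\notin\tilde G$, and an element $g\in\tilde G$ with $ga\in R$, such that:
\begin{itemize}
\item[(i)] there is a walk from $s$ to $u$ in $G_U$ with yield $g$;
\item[(ii)] $t$ is connected to $v$ in $G$.
\end{itemize}
Necessity follows from Step~1. For sufficiency, concatenate the three pieces. The total yield is $g\,a\,w$ for some $w$, and $g\,a\,w\in RM\subseteq R$ because $R$ is a right ideal.

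\textbf{Step 3: the algorithm.} The algorithm loops over all edges, both orientations, and all $g$ in the constant-size group $\tilde G$. For each choice it checks condition (i) by running the logspace algorithm of Theorem~\ref{thm: logspace algo for accepting identity for any finite monoid} (with accepting set $\{g\}\subseteq\tilde G$) on $G_U$ viewed as a $\tilde G$-labelled graph; this is equivalently the group case of \cite{RSS19}. It checks condition (ii) with Reingold's algorithm \cite{Rei04}. The subgraph $G_U$ is logspace-computable on the fly, and a constant number of logspace subroutines composed inside logspace loops remains logspace.

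The main obstacle, I expect, is Step~1. It must be argued cleanly that prefix yields cannot leave a non-unit $\mathcal R$-class and re-enter another, which is exactly where the right-ideal hypothesis and the unit-factor lemma are used. The remaining steps are routine compositions of known logspace procedures.
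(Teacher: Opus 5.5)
Your proposal is correct and follows essentially the same route as the paper. The paper proves the same claim: take the first non-unit edge $e$, require a unit-yield prefix $g$ with $g\cdot\phi(e)\in R$ (checked via Theorem~\ref{thm: logspace algo for accepting identity for any finite monoid}) and an unconstrained suffix (checked via Reingold's algorithm), using the right-ideal hypothesis in exactly the two places you do. Your explicit fact that $xy\in\tilde{G}$ forces $x,y\in\tilde{G}$ simply makes rigorous the paper's Case~1 and its remark that unit-labelled prefixes never leave $\tilde{G}$.
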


Finally, we show the following classification of the complexity of the $\reachmf$ problem when the monoid $M$ is either $\BA_2$ or $\U$ (see preliminaries to see the connection between these monoids and the non-commutative aperiodic monoids) (Section~\ref{subsec: Dichotomy for BA_2 and U with Different Accepting Sets}). Note that by Theorem~\ref{thm: logspace algo for accepting identity for any finite monoid}, there is a logspace algorithm for the problem when $F=\{id\}$ or its complement by using the fact that the reachability problem reduces to its complement.

\begin{restatable}[]{theorem}{thmBAUsubset}

\label{thm:BA2Usubset}
     When the monoid $M$ is $\BA_2$ (or $\U$ respectively), $\reachmf$ is $\NL$-hard with accepting set $F \subset M \setminus \{id\}$, where $id$ denotes the identity of $M$.

\end{restatable}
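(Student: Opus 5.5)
The plan is to reduce directed $s$-$t$ reachability (which is $\NL$-complete) to $\reachmf$, taking the key idea from the proof of Theorem~\ref{idempotent hardness with certain conditions}. There, a directed edge $u\to v$ is simulated by an undirected edge (or a short undirected path) whose labels are chosen so that traversing it backwards forces a factor $a^2=0$ or $b^2=0$ into the yield. Both $\BA_2=\{id,a,b,ab,ba,0\}$ (with $aba=a$, $bab=b$, $a^2=b^2=0$) and $\U$ contain the needed nilpotent generators and the absorber $0$. The first step is therefore to fix, for each monoid, a gadget for a directed edge with the following property: a walk that crosses the gadget in the forward direction contributes one of the factors $ab$, $ba$, $a$ or $b$, while any walk that crosses it backward or stutters on it contributes a product that collapses to $0$. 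Concretely, each directed edge $u\to v$ is subdivided as $u - w - v$ with labels $a$ on $uw$ and $b$ on $wv$. A forward pass then yields $ab$. Going back along an edge yields $a\cdot a=0$ or $b\cdot b=0$, and a backward pass $v\to u$ yields $ba$, which is followed next by $ab$ or $ba$, so any deviation from the directed structure produces a product that annihilates. I will check this with the relations $ab\cdot ab=ab$, $ab\cdot ba = a(bb)a=0$, and their analogues in $\U$.

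The second step is a case split on $F$ according to which non-identity, non-zero element $m\in F$ we target; we may take $m$ to be one of $a,b,ab,ba$. We prepend and append short labelled pendant paths at $s$ and $t$. For instance, to target $a$ we attach a path labelled $a$ at the end, since $ab\cdots ab\cdot a=a$. This makes every honest directed path from $s$ to $t$ produce exactly $m$. Next we argue that every walk whose yield is a nonzero element corresponds to a directed $s$-$t$ path, because the yield of any nonzero word in $\BA_2$ is determined by an alternating pattern of $a$'s and $b$'s. Finally, because the accepting set $F$ may also contain other nonzero elements, we must ensure that no spurious walk yields some other element of $F$. To do this we choose, among the nonzero elements of $F$, one whose first and last letters are forced by the boundary gadgets; the alternating structure then pins down the yield completely. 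The computation for $\U$ is parallel, with its own defining relations.

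I expect the main obstacle to be the role of $0$. Because the graph is undirected, stepping back and forth on any edge labelled $a$ already produces $0$. So if $0\in F$, the walks that yield $0$ are essentially trivial, and the hardness must come from the elements of $F\setminus\{0\}$. If instead $F=\{0\}$, the question seems to reduce to undirected connectivity plus the existence of a suitable nilpotent label, which is in \ComplexityFont{L}. I would therefore first make sure exactly which subsets the statement covers; presumably the intended $F$ avoids the degenerate choices $\{0\}$ and the unit-group case handled by Theorem~\ref{thm: logspace algo for accepting identity for any finite monoid}. I would then carry out the gadget analysis carefully for each orbit of nonzero targets under the symmetry $a\leftrightarrow b$, so that only a few cases remain for $\BA_2$ and for $\U$.
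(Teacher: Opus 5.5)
For $\BA_2$ with $0\notin F$, your plan matches the paper's. You subdivide each arc into an $\alpha$-edge followed by a $\beta$-edge and fix the boundary letters with pendant edges. A nonzero word is alternating and determined by its first and last letters, so every walk yields either the chosen target or $0$; one target $y\in F$ also handles $|F|\ge 2$, as in the paper's Case~2(b). Your justification is slightly off, since $\beta\alpha\cdot\beta\alpha=\beta\alpha\neq 0$. Only a \emph{change} of direction is annihilated, so all-backward walks must be excluded, either by deleting arcs into $s$ and out of $t$ or by the pendant letters.

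\textbf{The gap for $\U$.} You treat both generators as nilpotent, but in $\U$ we have $\alpha^2=\alpha$ and only $\beta^2=0$. Bouncing on an $\alpha$-edge is therefore harmless and resets the last letter to $\alpha$, after which an incoming gadget can be run backwards. Take arcs $s\to x$, $x\to y$, $z\to x$, $z\to t$, so there is no $s$-$t$ path, and use your gadget ($\alpha$ on $\{u,w_{uv}\}$, $\beta$ on $\{w_{uv},v\}$). The walk $s,w_{sx},x,w_{xy},x,w_{zx},z,w_{zt},t$ reads $\alpha\beta\alpha\alpha\beta\alpha\alpha\beta$. It has no factor $\beta\beta$ and yields $\alpha\beta$. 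With your pendants added it is a false positive for every target. The reverse orientation fails too, since a backward pass is then allowed right after a forward arrival.

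The paper does not rescue this. For $\U$ it invokes Theorem~\ref{idempotent hardness with certain conditions} with $a=\beta$, $b=\alpha$. That construction is exactly this gadget plus pendants $\{s',s\}$ labelled $\beta$ and $\{t,t'\}$ labelled $\alpha$, and the same walk then has yield $\beta\alpha$. Its case analysis misses a backward step taken right after a Case~2 bounce.

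A two-edge gadget $x,y$ can never kill the reverse pass intrinsically, because $yx$ is a factor of $xyxy$. Use three edges instead, e.g.\ a path $u,m_1,m_2,v$ labelled $\beta,\alpha\beta,\alpha$. Its forward product is $\beta\alpha\beta\alpha=\beta\alpha$, which is idempotent, and its reverse product is $\alpha\cdot\alpha\beta\cdot\beta=0$. A walk entering at $u$ keeps last letter $\beta$ on $m_1,m_2$ and can only leave at $v$. A walk entering at $v$ can only return to $v$. So nonzero walks trace directed walks, and pendants turn the yield $\beta\alpha$ into $\alpha$, $\beta$ or $\alpha\beta$.

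\textbf{The case $0\in F$.} You are right to distrust $F=\{0\}$, but your fallback that for $0\in F$ ``the hardness must come from $F\setminus\{0\}$'' cannot work. Every gadget contains a nilpotent label, and bouncing on it gives a $0$-yield walk in every connected instance. In fact $\reachmf$ is in \ComplexityFont{L} for every $F\ni 0$, for both monoids:
\begin{itemize}
\item If $s,t$ are connected, a walk of yield $0$ exists iff their component contains an edge labelled $\beta$ or $0$ (in $\BA_2$ also $\alpha$), or an $\alpha\beta$-edge and a $\beta\alpha$-edge whose endpoints are joined by $id$-labelled edges.
\item Otherwise every walk is nonzero, and its yield depends only on the first letter of its first non-identity label and the last letter of its last one.
\end{itemize}
All of this is checkable with Reingold's algorithm. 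So the statement as written fails for every $F\ni 0$ unless $\NL$ collapses to \ComplexityFont{L}.

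The paper's treatment of these cases (Lemma~\ref{thm:BA2U_0} and Case~2(a), via $\overline{\textsc{Reach}}$) breaks exactly at your observation. ``The yields are either $\beta\alpha$ or $0$'' holds walk by walk, but an instance with an $s$-$t$ path also has $0$-yield walks. Hence no-instances of $\overline{\textsc{Reach}}$ are mapped to yes-instances as well. What your approach does establish, with the corrected $\U$ gadget, is $\NL$-hardness for every $F$ with $0\notin F$ and $F\neq\{id\}$, and that is the correct dividing line.
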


Table \ref{tab:summary_table} tabulates the above results and some results of \cite{RSS19}.
\begin{table}
\centering
\begin{tabularx}{\textwidth} { 
  | >{\centering\arraybackslash}X 
  | >{\centering\arraybackslash}X 
  | >{\centering\arraybackslash}X | }
 \hline
 \textbf{Restrictions on finite magma $M$} & \textbf{Restrictions on accepting set $F\subseteq M$} & \textbf{Space complexity of $\reachmf$ } \\
 \hline
 \hline
 Group, Quasigroup  & Any subset of $M$ & ~~Logspace\newline (Theorem 3.4, 3.7 in  \cite{RSS19}) \\
\hline
Commutative aperiodic monoid, Monoids in $\DA$  & Any subset of $M$ & ~~Logspace\newline (Theorem 5.1, 5.2 in \cite{RSS19}) \\
\hline
$\BA_2$  & $F=\{\alpha\beta\}$ & ~~~~~~$\NL$-hard\newline (\cite{KSS14} and \cite{RSS19}) \\
\hline
$\U$  & $F=\{\alpha\}$ & ~~$\NL$-hard\newline (Theorem 5.4 in \cite{RSS19}) \\
\hline
Monoid  & Any subset of unit group & ~~~~Logspace\newline(Theorem \ref{thm: logspace algo for accepting identity for any finite monoid})  \\
\hline
Commutative monoid  & Any subset of $M$ & ~~~~Logspace\newline (Theorem \ref{thm: logspace algo commutative monoids})\\
\hline
$\calL$ (or $\calR$)-commutative union-of-groups monoid  & Any subset of $M$  & ~~~~Logspace\newline (Theorem \ref{Logspace algo when R-classes = H-classes}) \\
\hline
Union-of-groups monoid where every $\calR$-class other than unit group is right ideal  & Any $\calR$-class of $M$ other than its unit group & ~~~~Logspace\newline (Theorem \ref{R-classes right ideal thm}) \\
\hline
Monoid with absorber $0$ and an idempotent $\mu\neq 0$ factorizable as $a\cdot b$ such that $a^2=0$ or $b^2=0$  & $F=\{\mu\}$ &~~~~$\NL$-hard\newline(Theorem \ref{idempotent hardness with certain conditions})\\
\hline
$\BA_2$ or $\U$  & $F\subset M\setminus \{id\}$ & ~~~~$\NL$-hard\newline (Theorem \ref{thm:BA2Usubset}) \\
\hline
 $\BA_2$ or $\U$  & $F=\{id\}$, or $F=M\setminus \{id\}$ & ~~~~Logspace\newline
 (Theorem \ref{thm: logspace algo for accepting identity for any finite monoid} and Remark \ref{remark:BA2-U})
 \\

\hline
\end{tabularx}
\caption{Summary of space-complexity results discussed in this paper.}
\label{tab:summary_table}
\end{table}

\noindent{\bf Our Techniques:} Our main tool is the analysis of the product graph of a graph $G$ whose edges are labelled over the monoid $M$ (defined by \cite{RSS19}), which is a directed graph in which the paths correspond to walks in the original graph $G$. We use the structural properties of the monoid to derive the combinatorial structure of the graph and obtain upper and lower bounds\footnote{Due to space limitations, we refer the reader to the full version of the paper, for additional details of the proof and remarks.}.

\section{Preliminaries}
\label{ec:prelims}
This section introduces the terminologies and notations required to understand the paper. We start by defining the space complexity classes. The class \ComplexityFont{L} is the set of all decision problems that can be solved by a deterministic Turing machine in logspace, and the class $\NL$ is the set of all decision problems that can be solved by a non-deterministic Turing machine in logspace. A problem $A$ is said to be $\NL$-hard if for every problem $B \in \NL$, there exists a logspace computable reduction function $\sigma:\{0,1\}^* \rightarrow \{0,1\}^*$, such that $\forall x,  ~ x \in B \iff \sigma(x) \in A$. A problem $A$ is said to be $\NL$-complete if $A \in \NL$, and it is $\NL$-hard. It is known that \ComplexityFont{L} $\subseteq \NL$, but the other containment is still unknown. Thus, we focus on the directed reachability problem (\textsc{Reach}) that captures all of $\NL$.  \cite{Rei04} gave a deterministic logspace algorithm for the undirected graph reachability problem, thus showing that the undirected reachability problem captures all of \ComplexityFont{L}. 
Reachability for a special class of directed graphs called Eulerian graphs is known to be in \ComplexityFont{L} \cite{RTV06}, where the Eulerian graph is a directed graph with the in-degree of each vertex in the graph equal to its out-degree.

In a graph $G$, a \emph{walk} is a sequence of edges that joins a sequence of vertices. That is, a walk is of the form $(e_1, e_2, \ldots, e_r)$, where for each $1\leq i< r$, $e_i=(v_i, v_{i+1})\in E$; here, it is joining the sequence of vertices $(v_1, v_2, \ldots, v_{r+1})$. A \emph{path} is a walk in which vertices and edges are not allowed to repeat; that is, $e_1, e_2, \ldots, e_r$ are distinct edges, and $v_1, v_2, \ldots, v_{r+1}$ are distinct vertices.

Next, we define some algebraic structures relevant for the paper.
A \textit{semigroup} $S=(T,*)$ is a set $T$ with a binary operation $*$\footnote{Throughout the paper, we consider $(\cdot)$ as the binary operation} satisfying the properties of closure and associativity.
A \textit{monoid} $M=(T,*)$ is a semigroup with a binary operation $*$ where an identity exists in $T$. A \textit{group} $\mathcal{G} = (T, *)$ is a monoid with a binary operation $*$ where each element of $T$ has an inverse. A \textit{right absorber} $a$ is an element in a monoid $M$ such that for all $b \in M$,  $ba=a$. Similarly, a \textit{left absorber} $a \in M$ is such that for all $b \in M$, $ab=a$. An element that is both a right and a left absorber is simply called an \textit{absorber}. In a monoid, if an absorber exists, it is unique. An \textit{idempotent} element $\mu$ is an element of a monoid $M$ such that $\mu^2 = \mu$. An element $a$ of a monoid $M$ is said to be \textit{non-central} if there exists $x \in M$ such that $a \cdot x \neq x \cdot a$.
A subset $T$ of a monoid $M$ is said to be a \textit{right ideal} if $\forall a \in T$ and $m \in M$, the element $a \cdot m$ is in the set $T$. Similarly, a \textit{left ideal} can be defined. A subset $T$ of a monoid is said to be an \textit{ideal} if it is both a left and a right ideal. 

We study the reachability problem variant over different monoid classes. First, we look at some monoid classes defined via Green's relations.
There are five Green's relations $\calL,\calR,\calJ,\calH$ and $\mathcal{D}$ which are equivalence relations, see \cite{Col11}.
For a monoid $M$,
\begin{itemize}
    \item $a \leq_{\calL} b$ if $\exists x \in M$ such that $a = xb$.  
    
    $a \mathrel{\calL} b$ if $a \leq_{\calL} b$ and $b \leq_{\calL} a$\footnote{More generally, for any preorder $\leq$, it is a standard concept to define an equivalence relation (said to be \emph{induced} by $\leq$) wherein $a$ is related to $b$ when $a \leq b$ and $b \leq a$. Here, $\calL, \calR, \calJ$ are the equivalence relations induced by $\leq_{\calL}$, $\leq_{\calR}$, $\leq_{\calJ}$ respectively.}; In other words,
   $a \mathrel{\calL} b \iff M a = M b$
    
    \item $a \leq_{\calR} b$ if $\exists x \in M$ such that $a = bx$. 
    
    $a \mathrel{\calR} b$ if $a \leq_{\calR} b$ and $b \leq_{\calR} a$;  In other words, $a \mathrel{\calR} b \iff a M = b M$
    \item $a \leq_{\calJ} b$ if $\exists x,y \in M$ such that $a = xby$.
    
    $a \mathrel{\calJ} b$ if $a \leq_{\calJ} b$ and $b \leq_{\calJ} a$; In other words, $a \mathrel{\calJ} b \iff M a M = M b M$ 
    \item $\calH = \calL \cap \calR$. 
    
    $a \mathrel{\calH} b \iff a \mathrel{\calR} b$ and $a \mathrel{\calL} b$
    \item $\mathcal{D} = \calL \circ \calR = \calR \circ \calL$

    $a \mathrel{\calD} b \iff \exists x \in M \text{ such that } a \mathrel{\calL} x \text{ and } x \mathrel{\calR} b$
\end{itemize}

For finite monoids, the relations $\calD$ and $\calJ$ coincide. We say $a \le_{\calJ} b$ if $M a M \subseteq M b M$, and $a <_{\calJ} b$ if $M a M \subsetneq M b M$.

In a monoid $M$, a $\calH$-class $H$ contains an idempotent element if and only if $H$ forms a maximal subgroup of $M$. A monoid $M$ is a \textit{union-of-groups} (also called \textit{completely regular} monoids) if each $\calH$-class contains an idempotent element and thus forms a maximal subgroup of $M$. An example of a union-of-groups monoid is $M = \{a,a^2,e,ae\}$, where $a^2$ is the monoid identity and $\forall m \in M, em=e$, $a^2e=e$, $a^3 = a$. There are two $\calJ$  classes: one class with $\{a,a^2\}$, other with $\{e,ae\}$ where $e \mathrel{\calL} ae$, see  \cite{KTT07}. This monoid has three $\calH$-classes: $\{a,a^2\}$, $\{e\}$, $\{ae\}$ which are the maximal groups of $M$. A \textit{Clifford monoid} is a union-of-groups monoid whose idempotent elements commute.
Another monoid class we consider is the class of commutative monoids. A monoid is said to be  \textit{commutative} if $\forall a,b \in M$, we have $ab = ba$.  A monoid $M$ is said to be \textit{$\calL$-commutative} if for all $a,b \in M$, $ab$ is $\calL$ related to $ba$ (i.e. $ab$ and $ba$ belong to the same $\calL$-class), see \cite{nagy2001special}. We can similarly define \textit{$\calR$-commutative} monoids. See figure~\ref{inclusion map figure} for inclusion of the monoid classes.

\FloatBarrier
\begin{figure}[H]
\label{inclusion map figure}
\centering
\includegraphics[scale=0.7]{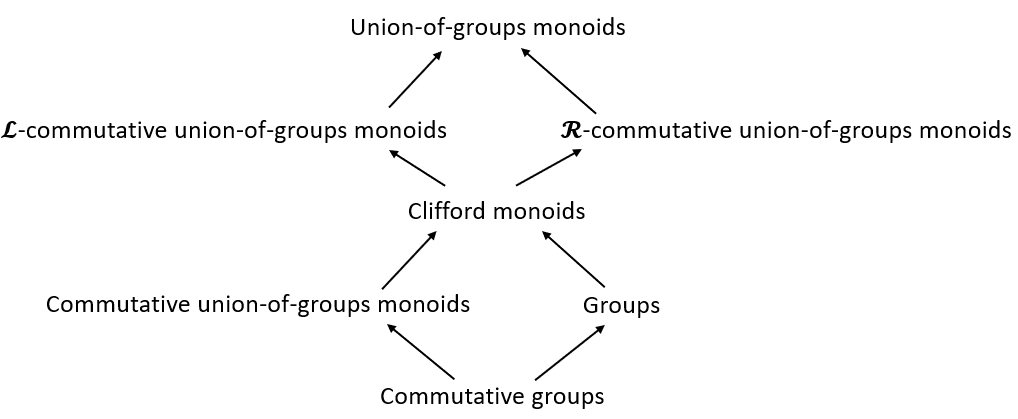}
\caption{Overview of inclusions amongst certain sub-classes of union-of-groups monoids.}
\end{figure}

We now see some terminologies related to $\calD$-classes. An element $x \in M$ is \emph{regular} if there exists $y \in M$ such that $xyx = x$. A $\calD$-class $A \subseteq M$ is regular if all of its elements are regular. Equivalently, a $\calD$-class is regular if it contains at least one idempotent element ($\mu^2 = \mu$).
A semigroup $S$ is called \textit{completely 0-simple} if (i) $S$ and $\{0\}$ are the only ideals (ii) $S^2 \ne \{0\}$ (iii) it contains a primitive idempotent (i.e. a non-zero idempotent $\mu$ that satisfies the following property for all non-zero idempotents $\gamma$: $\mu \gamma = \gamma \mu = \gamma \implies \mu = \gamma$).
A \textit{Principal factor} is defined as follows. Let $A$ be a regular $\calD$-class of a finite monoid $M$. 
Let $I_A = \{ x \in M \mid x <_{\calJ} a \text{ for any }a \in A \}$ be the ideal of all elements strictly below $A$ in the $\calJ$-ordering. The \emph{principal factor} associated with $A$ is the quotient semigroup $M(A) = (A \cup I_A) / I_A$, i.e., the Rees factor semigroup of the semigroup $A \cup I_A$ modulo its ideal $I_A$. In this quotient, all elements of $I_A$ are identified with a single absorbing element $0$; here, the product of any two elements $x,y$ is $x\cdot y$ (i.e., same as product in $M$) when $x, ~y,~ x\cdot y \in A$, and $0$ otherwise. For a regular $\calD$-class, $M(A)$ is a completely $0$-simple semigroup (see Lemma 2.4 in \cite{dandan2019group}).

Next we define \textit{Rees Matrix Semigroups \cite{How95}.}
Let $G$ be a finite group, and let $I$ and $\Lambda$ be finite index sets. Let $P = (P_{\lambda, i})$ be a $\Lambda \times I$ matrix with entries in $G \cup \{0\}$, called the \emph{sandwich matrix}. The \emph{Rees matrix semigroup} $\mathcal{M}^0(G; I, \Lambda; P)$ is the set $(I \times G \times \Lambda) \cup \{0\}$ equipped with the binary operation defined by:
\[
(i, g, \lambda)(j, h, \kappa) = 
\begin{cases} 
(i, g P_{\lambda, j} h, \kappa) & \text{if } P_{\lambda, j} \neq 0 \\
0 & \text{if } P_{\lambda, j} = 0
\end{cases}
\]
and $x \cdot 0 = 0 \cdot x = 0$ for all $x$. 

By the Rees Theorem, every completely $0$-simple semigroup is isomorphic to a Rees matrix semigroup where every row and column of $P$ contains at least one non-zero entry. 
The class $\DS$ consists of all finite monoids $M$ in which every regular $\calD$-class forms a subsemigroup of $M$ \cite{Pin86}. A regular $\calD$-class $A$ is a subsemigroup if and only if its corresponding sandwich matrix $P$ contains no zero entries. Thus, $M \notin \DS$ if and only if there exists a regular $\calD$-class $A$ whose sandwich matrix $P$ contains at least one zero entry.

Next, we define another monoid class over which we study the reachability problem.  A monoid is called \textit{aperiodic} if for every element $a \in M$, there exists a $k \in \mathbb{N}$ such that $a^k=a^{k+1}$. The number $k$ is called the \textit{index} of the element $a$ in $M$. A monoid $N$ is said to \textit{divide} a monoid $M$ if there is a surjective homomorphism (say $f$) from a submonoid (say $M'$) of $M$ to $N$. $\DA$ is a class of aperiodic monoids such that if there exists an idempotent element in a $\calJ$-class $J$, then all the elements of $J$ are idempotent elements.

Further, among aperiodic monoids, we primarily focus on two non-commutative monoids, which are $\BA_2$ \footnote{More generally, for every $k\geq 1$, the \emph{aperiodic Brandt monoid} of size $k$ (denoted as $\BA_k$) is the monoid that consists of all partial functions $f:[k]\rightarrow [k]$ such that $|f^{-1}([k])|\leq 1$ along with the identity map, with function composition as the monoid operation, see \cite{MP84}}
and $\U$ \footnote{some literature also uses the notation $B_2^1$ and $A_2^1$ for $\BA_2$ and $\U$ respectively} 
(see table~\ref{tab:BA_2U} for the Cayley tables). It is known (see \cite{Tes98thesis}) that for any nontrivial finite non-commutative aperiodic monoid $M$ which is outside the pseudovariety $\DA$, either $\BA_2$ or $\U$ divides $M$. In terms of formal language theory, these two monoids are the syntactic monoids (see \cite{Pin86}) of the regular languages represented by $(c^*ac^*bc^*)^*$ and $((b+c)^*a(b+c)^*b(b+c)^*)^*$, respectively.

For the purpose of this paper, we require only the Cayley tables for the monoids $\BA_2$ and $\U$, which are given below.
\begin{table}
\vspace{-8mm}
\begin{center}
\begin{minipage}
{0.45\textwidth}

\centering
\[
\begin{array}{c|cccccc}
   & 0 & 1 & \alpha & \beta & \alpha\beta & \beta\alpha \\ \hline
0  & 0 & 0 & 0 & 0 & 0 & 0 \\
1  & 0 & 1 & \alpha & \beta & \alpha\beta & \beta\alpha \\
\alpha & 0 & \alpha & 0 & \alpha\beta & 0 & \alpha \\
\beta  & 0 & \beta & \beta\alpha & 0 & \beta & 0 \\
\alpha\beta & 0 & \alpha\beta & \alpha & 0 & \alpha\beta & 0 \\
\beta\alpha & 0 & \beta\alpha & 0 & \beta & 0 & \beta\alpha \\
\end{array}
\]
\end{minipage}
\begin{minipage}{0.45\textwidth}
\centering
\hspace{-30 mm}
\hspace{-30 mm}
\[
\begin{array}{c|cccccc}
   & 0 & 1 & \alpha & \beta & \alpha\beta & \beta\alpha \\ \hline
0  & 0 & 0 & 0 & 0 & 0 & 0 \\
1  & 0 & 1 & \alpha & \beta & \alpha\beta & \beta\alpha \\
\alpha & 0 & \alpha & \alpha & \alpha\beta & \alpha\beta & \alpha \\
\beta  & 0 & \beta & \beta\alpha & 0 & \beta & 0 \\
\alpha\beta & 0 & \alpha\beta & \alpha & 0 & \alpha\beta & 0 \\
\beta\alpha & 0 & \beta\alpha & \beta\alpha & \beta & \beta & \beta\alpha \\
\end{array}
\]
\end{minipage}
\end{center}
\caption{Cayley tables for $\BA_2$ and $\U$ (respectively).}
\label{tab:BA_2U}

\end{table}

The \textit{unit group} of a monoid $M$ is the set of all units/invertible elements of the monoid, i.e., all elements $a\in M$ for which there exists $x\in M$ such that $a\cdot x= x\cdot a = id$, where $id$ denotes the monoid identity of $M$. For any graph $A$, its \textit{endomorphism monoid} is the set of all homomorphisms from $A$ to itself, with function composition as the monoid operation.  \\[-2mm]

\noindent{\bf The Product Graph:}
We now define our main tool called the product graph (defined in \cite{RSS19}). Let $M$ be a fixed finite monoid. Consider a graph $G(V, E)$ with labelling $\phi: E \rightarrow M$. The product graph $G'(V',E')$ of $G$ is the directed graph with $V' = V \times M$ and $E' = \{ ((u, m),(v,m')) \mid m,m'\in M, \{u,v\} \in E \textrm{ and } m \cdot \phi(\{u,v\}) = m'\}$.

\begin{proposition}[\cite{RSS19}] 
    Let $s,t \in V, m \in M$ and $id$ is the identity element of $M$. Then there is a walk from $s$ to $t$ in $G$ with yield $m$ if and only if there is a path from $(s, id)$ to $(t,m)$ in $G'$. 
\end{proposition}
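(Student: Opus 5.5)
We prove both directions by induction on the length of the walk (respectively, the path). The whole argument rests on one invariant: a walk in $G$ from $s$ to $v$ with yield $m$ corresponds exactly to a directed walk in $G'$ from $(s,id)$ to $(v,m)$. We then shorten the directed walk to a path.

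\emph{Forward direction.} Let $P=(v_1,\dots,v_{k})$ be a walk in $G$ with $v_1=s$ and $v_k=t$. Define the prefix yields $m_1=id$ and $m_{j+1}=m_j\cdot\phi(\{v_j,v_{j+1}\})$. By associativity of $M$, $m_k=\phi(P)$. Each pair $((v_j,m_j),(v_{j+1},m_{j+1}))$ lies in $E'$ by the definition of the product graph, since $\{v_j,v_{j+1}\}\in E$ and $m_j\cdot\phi(\{v_j,v_{j+1}\})=m_{j+1}$. So $(v_1,m_1),\dots,(v_k,m_k)$ is a directed walk in $G'$ from $(s,id)$ to $(t,m)$. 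Any directed walk contains a directed path between the same endpoints, obtained by repeatedly cutting out the segment between two occurrences of the same vertex. Hence there is a path from $(s,id)$ to $(t,m)$. The trivial case $k=1$ works too: the empty walk has yield $id$ and gives the trivial path at $(s,id)$.

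\emph{Backward direction.} Let $(u_1,n_1),\dots,(u_k,n_k)$ be a directed path in $G'$ with $(u_1,n_1)=(s,id)$ and $(u_k,n_k)=(t,m)$. By the definition of $E'$, each step gives an edge $\{u_j,u_{j+1}\}\in E$ with $n_{j+1}=n_j\cdot\phi(\{u_j,u_{j+1}\})$. So $(u_1,\dots,u_k)$ is a walk in $G$; it may repeat vertices, which is allowed since we only need a walk. By induction on $j$, the yield of the prefix $(u_1,\dots,u_j)$ equals $n_j$: the base case is $n_1=id$, and the inductive step is the recurrence above. Therefore the walk has yield $n_k=m$.

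There is no real obstacle. The only points needing care are the role of associativity, which lets the yield be computed left to right as in the definition of $E'$, and the fact that walks in $G$ may repeat edges while we want a simple path in $G'$; the loop-removal step in the forward direction handles the latter.
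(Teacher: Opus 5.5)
Your proof is correct. The paper gives no proof of this proposition and simply imports it from \cite{RSS19}; your argument (prefix yields turn a walk in $G$ into a directed walk in $G'$ from $(s,id)$, shortcutting repeated vertices gives a path, and the product-graph edge rule gives the converse) is the standard one behind it, and your treatment of the empty walk (yield $id$, trivial path at $(s,id)$) is the right convention for the case $s=t$, $m=id$.
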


\section{Reachability with Monoid Identity/Idempotent Element as the Accepting Element}

In this section, we prove the results that we described in the introduction for the case when the accepting subset is the \textit{identity} of the monoid (Section~\ref{subsec: Logspace Algorithm for all Finite Monoids}) and for the accepting subset being an \textit{idempotent} element of the monoid (Section~\ref{subsec: NL-hardness with an Idempotent as the Accepting Element}).

\subsection{Logspace Algorithm with Monoid Identity as the Accepting Element}
\label{subsec: Logspace Algorithm for all Finite Monoids}
In the following theorem, we show a logspace algorithm for any fixed finite monoid when the accepting set is the identity element of the monoid.

\monoididentityalgo*

\begin{proof}
Drop all edges with labels in $M\setminus \tilde{G}$ and run the logspace algorithm by \cite{RSS19} for undirected labelled reachability over group $\tilde{G}$ on the graph so obtained. We now argue that dropping such edges is safe, i.e., any walk in the original graph that is lost now (due to some of its edges being dropped) had its yield outside $F$. Consider any walk $P$ that contains an edge labelled with an element from $M\setminus \tilde{G}$; let $e$ denote the first such edge of $P$, say with label $a\in M\setminus \tilde{G}$.  It suffices
to argue that $P$’s yield can never be in $\tilde{G}$. For the sake of contradiction, assume that $P$’s yield is $g$ for some $g\in \tilde{G}$. Let $x \in \tilde{G}$ and $y \in M$ denote the yields of subwalks of $P$ that appear before edge
$e$ and after edge $e$, respectively. Then, we have $x\cdot a \cdot y = g$. Pre-multiplying and post-multiplying both sides by $x^{-1}$ and $g^{-1}\cdot x$ respectively, we get $a\cdot (y\cdot g^{-1}\cdot x) = id$. Post-multiplying both sides by $a$, we get $a\cdot (y\cdot g^{-1}\cdot x\cdot a) = a$. Note that we also have $a\cdot (id) = a$. Now, if $y\cdot g^{-1}\cdot x\cdot a = id$, then $a$ has $y\cdot g^{-1}\cdot x$ as both its left inverse and right inverse, which is not possible as $a$ is a non-unit. So, we get $y\cdot g^{-1}\cdot x\cdot a \neq id$.   Thus, the sequence $a \cdot m_1, a \cdot m_2, ...., a \cdot m_k$ (where $m_1, m_2, ...., m_k$ denote all elements of $M$) has the element $a$ repeated at least twice. So, this sequence must skip at least one other element (say $m'$) of $M$.\footnote {One may also view this as saying that while embedding the monoid $M$ into the full transformation monoid of $M$ as per Cayley's theorem, the map from $M$ to $M$ that is associated with the element $a$ is a non-injective (and so, also non-surjective) map.} However, $m'$ can be written as $a\cdot (y\cdot g^{-1}\cdot x\cdot m')$ and so, $m'$ should not have been skipped by this sequence, a contradiction. This proves Theorem~\ref{thm: logspace algo for accepting identity for any finite monoid}.
\end{proof} 

We remark that the proof of Theorem~\ref{thm: logspace algo for accepting identity for any finite monoid} can also be viewed as arguing that $M\setminus \tilde{G}$ forms an ideal.

We now give logspace algorithms for special classes of monoids when the accepting set is the identity element. The arguments use the underlying structure of the monoid.
\subsection*{Union-of-Groups monoids}

\begin{theorem}
\label{thm: accepting set is in unit group for union-of-groups monoid}
Consider any finite fixed union-of-groups monoid $M$. Let $\tilde{G}$ denote the unit group (i.e., the $\mathcal{H}$-class containing the monoid identity $id$) of $M$. Then, $\textsc{Reach}_{M, F}$ can be solved using a deterministic logspace algorithm for all accepting sets $F\subseteq \tilde{G}$.
\end{theorem}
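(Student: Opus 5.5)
The plan is to reuse the algorithm from the proof of Theorem~\ref{thm: logspace algo for accepting identity for any finite monoid}: delete every edge whose label lies outside $\tilde{G}$, then run the logspace algorithm of \cite{RSS19} for undirected labelled reachability over the group $\tilde{G}$ on the remaining graph. The only thing to prove is that this deletion is safe, i.e.\ no walk using an edge labelled outside $\tilde{G}$ has yield in $\tilde{G}$. Since the section is about union-of-groups monoids, we will argue this through Green's relations instead of the Cayley embedding. The key claim is that $M\setminus\tilde{G}$ is an ideal of $M$, equivalently that for every non-unit $a$ and all $x,y\in M$ the product $xay$ is not a unit.

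To prove the claim, take $x a y = g$ with $g\in\tilde{G}$, aiming for a contradiction. Multiplying on the right by $g^{-1}$ gives $x\cdot(a y g^{-1}) = id$. Multiplying on the left by $g^{-1}$ gives $(g^{-1}x a)\cdot y = id$. From the first identity, $id \le_{\calL} a y g^{-1}\le_{\calL}\dots$; more simply, both identities give $id \le_{\calJ} a$. Since $id$ is the top of the $\calJ$-order, we get $a\mathrel{\calJ} id$.

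The union-of-groups structure is used at this point. In a finite monoid $\calJ=\calD$, so there is $z$ with $a\mathrel{\calR} z$ and $z\mathrel{\calL} id$. An element $\calL$-related to $id$ has a left inverse. By the footnote argument in the statement of Theorem~\ref{R-classes right ideal thm}, applied symmetrically with the map $m\mapsto h m$, such an element is a unit. Hence $z\in\tilde{G}$, and then $a\mathrel{\calR} z\mathrel{\calR} id$. The same footnote shows that the $\calR$-class of $id$ equals $\tilde{G}$, so $a\in\tilde{G}$, a contradiction.

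Alternatively, in the union-of-groups setting one can argue that $a$ lies in a group $\calH$-class with idempotent $e$, that $a\mathrel{\calJ} id$ forces $e\mathrel{\calJ} id$, and that in a finite monoid an idempotent $\calJ$-equivalent to $id$ equals $id$ by stability. We expect the main obstacle to be exactly this step: pinning down rigorously that the $\calJ$-class of $id$ is $\tilde{G}$. This is finite stability, which we will justify via the injectivity/surjectivity argument already in the paper.

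With the claim in hand, the conclusion is immediate. For a walk $P$ containing an edge labelled $a\notin\tilde{G}$, write $\phi(P)=x a y$; then $\phi(P)\in M\setminus\tilde{G}$, so $\phi(P)\notin F$. Therefore walks surviving in the pruned graph are exactly the candidates, their yields lie in $\tilde{G}$, and the \cite{RSS19} group algorithm decides membership in $F$ in logspace. Since the pruning is a local filter on edges, it is also logspace.
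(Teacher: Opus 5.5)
\textbf{Your proof is correct, but it takes a different route from the paper's.}

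The paper's proof of this theorem uses the union-of-groups hypothesis directly and equationally. Writing $x\cdot a\cdot y=g$, it observes that $x$ (respectively $x\cdot a$) lies in a maximal subgroup $H$ whose identity satisfies $id_H\cdot x=x$. Left-multiplying then gives $id_H\cdot g=g$, hence $id_H=g\cdot g^{-1}=id$, which forces $H=\tilde{G}$. A short case split on whether $x\cdot a\in\tilde{G}$ then yields $a\in\tilde{G}$.

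Your argument never actually uses the union-of-groups hypothesis, despite your sentence saying it is used at that point. The three ingredients all hold in every finite monoid: $id\le_{\calJ} a$, $\calJ=\calD$ for finite monoids, and the footnote's ``one-sided inverse implies unit'' argument. So what you have really proved is that $M\setminus\tilde{G}$ is an ideal in any finite monoid. That is the general Theorem~\ref{thm: logspace algo for accepting identity for any finite monoid}, and it is exactly the remark the paper makes after that proof.

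The trade-off is this. The paper's safety argument needs the local identities $id_H$ but uses no finiteness, so it works verbatim for infinite completely regular monoids. Yours needs finiteness but no structure on $M$.

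The detour through $\calJ=\calD$ is also unnecessary. From $x\cdot(a\cdot y\cdot g^{-1})=id$ and $(g^{-1}\cdot x\cdot a)\cdot y=id$, the footnote argument alone shows that $x$ and $g^{-1}\cdot x\cdot a$ are units. Hence $a=x^{-1}\cdot g\cdot(g^{-1}\cdot x\cdot a)\in\tilde{G}$. With that, the ``alternative'' paragraph via stability can be dropped entirely.
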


\begin{proof}
We drop all edges with labels in $M\setminus \tilde{G}$ and simply run the logspace algorithm by \cite{RSS19} for undirected reachability over group $\tilde{G}$ on the graph so obtained. We now argue that dropping such edges is safe, i.e., any walk in the original graph that is lost now (due to some of its edges being dropped) had its yield outside $F$. Consider any walk $P$ that contains an edge (say $e$) labelled with an element (say $a$) from $M\setminus \tilde{G}$. It suffices to argue that $P$'s yield can never be in $\tilde{G}$. For the sake of contradiction, assume that $P$'s yield is $g$ for some $g\in \tilde{G}$. Let $x\in M$ and $y\in M$ denote the yields of subwalks of $P$ that appear before edge $e$ and after edge $e$, respectively. Then,  we have $x\cdot a \cdot y = g$. Consider the following two cases:
\begin{description}
    \item[\textbf{Case 1}: $x\cdot a\in \tilde{G}$.]
    Then, pre-multiplying both sides of $x\cdot a\cdot y=g$ by $(x\cdot a)^{-1}$ gives $y = (x\cdot a)^{-1}\cdot g$ and so, $y\in \tilde{G}$. If $x\in \tilde{G}$, then pre-multiplying and post-multiplying $x\cdot a\cdot y = g$ by $x^{-1}$ and $y^{-1}$ respectively would give $a=x^{-1}\cdot g\cdot y^{-1}\in \tilde{G}$, which is not possible as $a\not\in \tilde{G}$. So, we have $x\notin \tilde{G}$. Let $H$ denote the group containing $x$, and let $id_H$ denote the identity of $H$. Next, pre-multiplying and post-multiplying both sides of $x\cdot a\cdot y = g$ by $id_H$ and $g^{-1}$ respectively gives $x\cdot a\cdot y\cdot g^{-1} = id_H$. However, LHS = $(x\cdot a\cdot y)\cdot g^{-1} = g\cdot g^{-1} = id\neq id_H$ and so, LHS $\neq$ RHS, a contradiction.
    \item[\textbf{Case 2}: $x\cdot a \not\in \tilde{G}$.]
    Let $H$ denote the group containing $x\cdot a$, and let $id_H$ denote the identity of $H$. Pre-multiplying and post-multiplying both sides of $x\cdot a\cdot y=g$ by $e_H$ and $g^{-1}$ respectively gives $x\cdot a\cdot y\cdot g^{-1} = id_H$. However, LHS = $(x\cdot a\cdot y)\cdot g^{-1} = g\cdot g^{-1} = id\neq id_H$ and so, LHS $\neq$ RHS, a contradiction.
\end{description}
This proves Theorem~\ref{thm: accepting set is in unit group for union-of-groups monoid}.
\end{proof}

\subsection*{Aperiodic monoids}

\begin{theorem}
\label{thm: aperiodic monoid logspace algo for identity}
Consider any finite fixed aperiodic monoid $M$ with monoid identity $id$. Then, $\reachmf$ with $F=\{id\}$ can be solved using a deterministic logspace algorithm.
\end{theorem}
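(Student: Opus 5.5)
For aperiodic $M$ the unit group is trivial: any unit $u$ satisfies $u^k = u^{k+1}$ for some $k$, and cancelling $u^k$ gives $u = id$. So $\tilde{G}=\{id\}$ and the statement already follows from Theorem~\ref{thm: logspace algo for accepting identity for any finite monoid}. The point of this subsection, however, is to argue from the structure of aperiodic monoids directly, so I would give a self-contained argument along the same lines as the proofs above.

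The algorithm deletes every edge whose label is not $id$. It then runs Reingold's undirected reachability algorithm \cite{Rei04} on the remaining subgraph to decide whether $s$ and $t$ are connected. A walk using only $id$-labelled edges clearly has yield $id$, so the whole task is to show that deleting the other edges is safe. That is, I must show that no walk containing an edge labelled $a\neq id$ can have yield $id$.

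Suppose such a walk has yield $x\cdot a\cdot y = id$. Then $a\cdot(y\cdot x) = id$, since $a\,(yx)\,a = x^{-1}$-free manipulation is unavailable, so I would argue as follows instead. From $x a y = id$, the element $x$ has a right inverse $ay$, and $a y$ has a left inverse $x$.

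The key lemma is that in a finite monoid, any element with a one-sided inverse is a unit. This is the injectivity/surjectivity argument already used in the footnote to Theorem~\ref{R-classes right ideal thm}: if $xz = id$, then $m\mapsto zm$ is injective, hence bijective on the finite set $M$, and it follows that $zx = id$ as well. Applying the lemma, $x$ is a unit, so $x = id$ by the triviality of units shown above. Then $ay = id$, so $a$ is a unit by the lemma again, and hence $a = id$, which is a contradiction.

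The main obstacle is this one-sided-inverse lemma, but it is routine. Alternatively, I would use aperiodicity directly: from $ay=id$ we get $a^n y^n = id$ for all $n$. Taking $n = k$ with $a^k=a^{k+1}$ gives $id = a^{k+1}y^{k+1} = a^k y^{k+1}$, while also $id = a^k y^k$. Multiplying the first identity appropriately shows $y = id$, and hence $a = id$.
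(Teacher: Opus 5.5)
Your argument is correct, and your opening remark already proves the statement: units of an aperiodic monoid are trivial, so $\{id\}$ is the whole unit group and Theorem~\ref{thm: logspace algo for accepting identity for any finite monoid} applies.

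The algorithm you describe is the paper's, but your main safety argument takes a different route. You combine triviality of units with the finite-monoid fact that one-sided inverses are two-sided. That is the same injective-hence-surjective mechanism that drives Theorem~\ref{thm: logspace algo for accepting identity for any finite monoid}. The paper instead uses only the aperiodicity identity and associativity. It chooses $e$ to be the \emph{first} non-$id$ edge, so the prefix yield is $id$ and $a\cdot x=id$ holds outright. Then $a^\ell x^\ell=id$, and idempotence of $\tilde a=a^\ell$ gives $\tilde a=\tilde a(\tilde a\tilde x)=\tilde a\tilde x=id$, whence $a=a^{\ell+1}=a^{\ell}=id$.

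Your ``alternative'' is essentially this argument, concluding on the other factor: $y=(a^ky^k)\,y=a^{k+1}y^{k+1}=id$ and then $a=ay=id$. What each route buys: the paper's never needs finiteness, since it works in any monoid satisfying $a^\ell=a^{\ell+1}$ elementwise. Yours makes visible that this theorem is just a special case of the general one.

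Three small clean-ups are needed.
\begin{enumerate}
\item Delete the garbled sentence beginning ``Then $a\cdot(y\cdot x)=id$''. It asserts something you have not yet justified, and you abandon it immediately.
\item Your alternative needs $ay=id$. In your setup, with an arbitrary non-$id$ edge, that requires first knowing $x=id$. Either take $e$ to be the first such edge, or apply the same power trick to $x\cdot(ay)=id$.
\item Finish the lemma explicitly. Surjectivity of $m\mapsto zm$ gives $w$ with $zw=id$; then $x=x(zw)=(xz)w=w$, so $zx=id$.
\end{enumerate}
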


\begin{proof}
We drop all edges with labels $\neq id$ and run the logspace algorithm for undirected, unlabelled graphs on the resulting graph. We now argue that dropping such edges is safe, i.e., any walk in the original graph that is lost now (due to some of its edges being dropped) had its yield outside $F$. Consider any walk $P$ that contains at least one edge labelled with an element from $M\setminus\{id\}$; let $e$ denote the first such edge, and let $a\in M\setminus \{id\}$ denote its label. It suffices to argue that $P$'s yield can never be $id$. For the sake of contradiction, assume that $P$'s yield is $id$. The yield of the subwalk of $P$ before edge $e$ is $id$. Let $x$ denote the yield of a subwalk of $P$ after edge $e$. Then, we have $a\cdot x = id$. As $M$ is aperiodic, there exists $\ell\geq 1$ such that $a^{\ell}=a^{\ell+1}$ and $x^{\ell}=x^{\ell+1}$. Note that $a^{\ell}$ and $x^{\ell}$ are idempotent elements. Let $\tilde{a}:=a^{\ell}$ and $\tilde{x}:=x^{\ell}$. Because of associativity, we write $\tilde{a}\cdot \tilde{x} = a^{\ell}\cdot x^{\ell} = a\cdot (a\cdot (\ldots a\cdot (a\cdot x )\cdot x\ldots )\cdot x)\cdot x$, which is $id$. Now, again because of associativity, we  have $\tilde{a}\cdot (\tilde{a}\cdot \tilde{x}) = (\tilde{a}\cdot\tilde{a})\cdot \tilde{x}$. Note that LHS $= \tilde{a}\cdot id$ (as $\tilde{a}\cdot \tilde{x}=id$), which is $\tilde{a}$; also, RHS $= \tilde{a}\cdot \tilde{x}$ (because $\tilde{a}$ is idempotent element), which in turn is $id$. Therefore, we get $\tilde{a}= id$. That is, $a^{\ell} = id$. However, as we also have $a^{\ell+1}=a^{\ell}$, it follows that $a=id$, a contradiction. This proves Theorem~\ref{thm: aperiodic monoid logspace algo for identity}.
\end{proof}

\subsection*{Commutative monoids}

\begin{theorem}
\label{thm: commutative monoids accepting set in unit group algo}
Consider any fixed finite commutative monoid $M$. Let $\tilde{G}$ denote the unit group (i.e., set of all invertible elements) of $M$. Then,
$\reachmf$ can be solved using a deterministic logspace algorithm for all accepting sets $F\subseteq \tilde{G}$.
\end{theorem}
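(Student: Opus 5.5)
The approach is the same edge-dropping strategy as in the proofs of Theorems~\ref{thm: logspace algo for accepting identity for any finite monoid} and~\ref{thm: accepting set is in unit group for union-of-groups monoid}. We remove from $G$ every edge whose label lies in $M\setminus\tilde{G}$. On the resulting graph, all labels lie in the unit group $\tilde{G}$, so we run the logspace algorithm of \cite{RSS19} for undirected labelled reachability over the group $\tilde{G}$ with accepting set $F\subseteq\tilde{G}$. The removal of edges is computable in logspace, since $M$ is fixed and membership of a label in $\tilde{G}$ is a constant-size table lookup. Composing it with the group algorithm therefore stays within deterministic logspace.

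The only thing to justify is that dropping these edges is safe. That is, any walk $P$ from $s$ to $t$ that uses an edge labelled $a\notin\tilde{G}$ has yield outside $\tilde{G}$, and hence outside $F$. Commutativity makes this almost immediate. Write the yield as $\phi(P)=x\cdot a\cdot y$, where $x$ and $y$ are the yields of the subwalks before and after that edge. By commutativity, $\phi(P)=a\cdot(x\cdot y)$.

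Suppose for contradiction that $\phi(P)=g\in\tilde{G}$. Then $a\cdot(x\cdot y\cdot g^{-1})=id$. Set $z=x\cdot y\cdot g^{-1}$. By commutativity we also have $z\cdot a=a\cdot z=id$, so $z$ is a two-sided inverse of $a$. This makes $a$ a unit, contradicting $a\notin\tilde{G}$. Equivalently, $M\setminus\tilde{G}$ is an ideal: in a commutative monoid, if a product is a unit then each factor is a unit.

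There is no real obstacle here. The one point to state carefully is that commutativity lets us collect the whole walk's yield as $a$ times something. This avoids the injectivity/Cayley-embedding argument used for general monoids in Theorem~\ref{thm: logspace algo for accepting identity for any finite monoid}, which would also suffice here. I would also note that walks staying entirely inside the pruned graph are unaffected, so the answer on the pruned instance equals the answer on the original one.
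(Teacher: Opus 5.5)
Your proposal is correct and matches the paper's proof: you drop every edge with a label in $M\setminus\tilde{G}$, run the group algorithm of \cite{RSS19}, and use commutativity to show that a non-unit label forces a non-unit yield. The only cosmetic difference is that the paper takes the \emph{first} non-unit edge so that the prefix yield $x$ is invertible and can be cancelled, whereas you commute $a$ to the front directly, so you do not need to make that choice.
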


\begin{proof}
We drop all edges with labels in $M\setminus \tilde{G}$ and simply run the logspace algorithm by \cite{RSS19} for undirected labelled reachability over group $\tilde{G}$ on the graph so obtained. We now argue that dropping such edges is safe, i.e., any walk in the original graph that is lost now (due to some of its edges being dropped) had its yield outside $F$. Consider any walk $P$ that contains at least one edge labelled with an element from $M\setminus \tilde{G}$; let $e$ denote the first such edge, and let $a\in M\setminus \tilde{G}$ denote its label. It suffices to argue that $P$'s yield can never be in $\tilde{G}$. For the sake of contradiction, assume that $P$'s yield is $g$ for some $g\in \tilde{G}$. Let $x\in \tilde{G}$ and $y\in M$ denote the yields of subwalks of $P$ that appear before edge $e$ and after edge $e$, respectively. Then,  $x\cdot a \cdot y = g$. Pre-multiplying both sides by $x^{-1}$ and post-multiplying both sides by $g^{-1}\cdot x$ gives $a\cdot y\cdot g^{-1}\cdot x = id$. So, $y\cdot g^{-1}\cdot x$ is the right inverse (and so, also the left inverse because $M$ is commutative) of $a$. That is, $a$ is a unit of $M$, which is not possible because $a\not\in \tilde{G}$. This proves Theorem~\ref{thm: commutative monoids accepting set in unit group algo}.
\end{proof}

\subsection{NL-hardness with an Idempotent Element as the Accepting Element}
\label{subsec: NL-hardness with an Idempotent as the Accepting Element}
Complementing the above, we show that there are non-commutative monoids (e.g., $\BA_2$ and $\U$) where the problem is $\NL$-hard even when the accepting element is an idempotent element. We show the following theorem from the introduction.

\specialidempotent*

\begin{proof}
This is similar to proof of Theorem 5.4 in \cite{RSS19}. First, we consider the case when $b^2=0$. We reduce from the Directed Reachability problem. Given an instance $(G, s, t)$ of Directed Reachability, drop all incoming edges at $s$ and drop all outgoing edges at $t$; this gives an equivalent instance, as no directed $s$-$t$ path in $G$ can make use of such dropped edges. Then, construct a reduced graph $G'$ by splitting every edge of $G$ and assigning labels $a$ and $b$ to the two edges so obtained. That is, for every edge $(u,v)\in E(G)$, add (apart from $u$ and $v$) in $G'$ a new vertex (say $m_{uv}$) and edges $\{u,m_{uv}\}$ and $\{m_{uv}, v\}$; also, set their labels $\phi(\{u,m_{uv}\}) = a$ and $\phi(\{m_{uv}, v\}) = b$. We argue that $G$ has an $s$-to-$t$ directed path if and only if $G'$ has an $s$-to-$t$ walk of yield $\mu$. If $G$ has an $s$-to-$t$ path $P$, the corresponding $s$-to-$t$ path in $G'$ has yield $(a\cdot b)^{\text{length}(P)} = \mu^{\text{length}(P)}$, which is $\mu$ because $\mu$ is an idempotent element. 

Next, suppose that $G'$ has an $s$-to-$t$ walk $P$ with yield $\mu$. For contradiction, assume that $G$ does not have a directed $s$-to-$t$ path. Then $P$ in $G'$ does not translate into a directed $s$-$t$ walk in $G$. So, starting from $s$, the walk $P$ must have picked a first `wrong' edge. That is, before this wrong edge was picked, the subwalk of $P$ in $G'$ so far translated back to a directed walk in $G$; once the wrong edge got picked, the corresponding walk in $G$ could not get extended in a directed manner. Also, as $s$ has no incoming edges, the first edge of $P$ cannot be a wrong edge. Then, observe that the first wrong edge of $P$ in $G'$ is of the form $\{v, m_{uv}\}$ (Case 1) or $\{m_{uv}, u\}$ (Case 2) for some edge $(u, v)\in E(G)$.

In Case 1, as $\{v, m_{uv}\}$ is the first wrong edge picked, the edge just before it (in the walk $P$ in $G'$), being a right edge, must be of the form $\{m_{wv}, v\}$ for some edge $(w, v)\in E(G)$. Now, both the first wrong edge and the right edge (picked by $P$ just before the first wrong edge) have label $b$. That is, $\phi(\{m_{wv}, v\}) = \phi(\{v,m_{uv}\}) = b$. So, $P$ 's yield must have $b^2$ as a factor. As $b^2=0$ is the absorber, this forces $P$'s yield to $0$, which is not possible as $P$'s yield is $\mu\neq 0$.

In Case 2, as $\{m_{uv}, u\}$ is the first picked wrong edge, the edge just before it (in the walk $P$ in $G'$), being a right edge, must be $\{u, m_{uv}\}$. So now, if all wrong edges picked by $P$ are of Case 2 form, then we can repeatedly by-pass the right edge just before the wrong edge and the wrong edge (i.e., $\{u, m_{uv}\}$ and $\{m_{uv}, u\}$) in the walk $P$; this gives a new walk in $G'$ which picks no wrong edges and so, it translates back to a directed $s$-$t$ walk in $G$, a contradiction.

This concludes the argument for the case $b^2=0$. The case $a^2=0$ is similar, but with a few minor changes. Upon subdividing any edge, instead of setting the labels $a$ and $b$, we set labels $b$ and $a$ (in that order); that is, for every edge $(u,v)\in E(G)$,  $\phi(\{u,m_{uv}\}) = b$ and $\phi(\{m_{uv}, v\}) = a$. Also, in the reduced graph $G'$, we add two extra vertices, say $s'$ and $t'$, which serve as a new source and a new sink, respectively. We also add an edge joining $s$ (i.e., original source) and $s'$ with label $a$, and an edge joining $t$ (i.e., original sink) and $t'$ with label $b$. Now, we argue that $G$ has a directed $s$-$t$ path if and only if $G'$ has an $s'$-$t'$ walk of yield $\mu=ab$. If $G$ has an $s$-to-$t$ path $P$, the corresponding $s$-to-$t$ path in $G'$ has yield $(b\cdot a)^{\text{length}(P)}$; also, prepending and appending this path with the edges $\{s',s\}$ and $\{t,t'\}$ respectively, we get an $s'$-to-$t'$ path in $G'$ of yield $a\cdot (b\cdot a)^{\text{length}(P)}\cdot b = (a\cdot b)^{\text{length}(P)+1} = \mu^{\text{length}(P)+1}$, which is $\mu$ because $\mu$ is an idempotent element. 

Next, we argue the reverse direction; that is, given an $s'$-to-$t'$ walk $P$ of yield $\mu$ in $G'$, we show that $G$ has a directed $s$-to-$t$ path. Note that $\{s',s\}$ must be the first edge of $P$. Also, once $P$ has reached the vertex $s$ via the first edge, $P$ will never revisit $s'$. This is because if $P$ does visit $s'$, it will do so by travelling from $s$ to $s'$ via the edge $\{s,s'\}$, and then it will have to take the same edge to get back to $s$ (it cannot stay stuck at $s'$ as it has to eventually reach $t'$); so, this causes yield of $P$ to have $\big(\phi(\{s',s\})\big)^2 = a^2 = 0$ as a factor, which forces the overall yield to become $0$, which is not possible as it is supposed to be $\mu$. Now, let $\tilde{P}$ denote the subwalk of $P$ between $s$ and the first occurrence of $t$; as we just argued, this subwalk never re-visits $s'$ and so, $\tilde{P}$ is completely contained in $G'\setminus \{s',t'\}$. For contradiction, assume that $G$ has no directed $s$-$t$ path. Then, the subwalk $\tilde{P}$ of $P$ does not translate back to a directed $s$-$t$ path in $G$. So, as before, consider the first wrong edge in $\tilde{P}$. Again, it can be of two forms discussed earlier, i.e., $\{v, m_{uv}\}$ (Case 1) or $\{m_{uv}, u\}$  (Case 2) for some edge $(u, v)\in E(G)$. As before, in Case 1, the yield of $\tilde{P}$ must contain $a^2=0$ as a factor, which forces the yield of $\tilde{P}$ (and so, also $P$) to be $0$, a contradiction. Also, in Case 2, using the repeated bypassing of the right edge before the wrong edge and the wrong edge, we get a new $s$-$t$ walk in $G'\setminus \{s',t'\}$ which translates back to a directed $s$-$t$ walk in $G$, a contradiction. This concludes the proof for the case $a^2=0$, too. \end{proof}

\begin{remark}
\label{rem: existence of idempotent for which NL-hardness holds}
As pointed out by a reviewer of the conference version of this paper earlier, the above proof works even if the condition $a^2=0$ or $b^2=0$ is relaxed to saying that $a^2$ (or $b^2$) does not divide $\mu$ in $\langle a,b\rangle$ (i.e., monoid generated by $a$ and $b$); that is, either $\lambda\cdot a^2\cdot \nu \neq \mu$ for all $\lambda, \nu\in \langle a,b\rangle$, or $\lambda\cdot b^2\cdot \nu \neq \mu$ for all $\lambda, \nu\in \langle a,b\rangle$. This is because in the reverse direction of the above proof, the desired contradiction follows even if having $b^2$ (or $a^2$) as a factor forces the yield of $P$ (or $\tilde{P}$) to be different from $\mu$ (not necessarily $0$). 

Also, when the finite monoid $M\notin \DS$, it can be argued that there always exist such elements $\mu, a, b\in M$ (and so, such a monoid $M$ always has an idempotent $\mu$ such that $\reachmf$ with $F=\{\mu\}$ is \NL-hard) as follows: As $M\notin \DS$, it is divisible by $BA_2$ or $U$ (see conditions 3(a) and 4(c) of Theorem 3 in \cite{shevrin1995theory}) and so, there exists a surjective homomorphism (say $\phi$) from some submonoid $M'$ of $M$ to $BA_2$ (or $U$). Arbitrarily pick $\gamma \in \phi^{-1}(\alpha)$ and $\delta\in \phi^{-1}(\beta)$. As $M$ is finite, there exists an integer $k$ such that $(\gamma \cdot \delta)^k$ is an idempotent (see Theorem 1.9 in \cite{clifford1961algebraic}). Now, we take $\mu:= (\gamma \cdot  \delta)^k$, $a:= (\gamma\cdot\delta)^{k-1}\cdot \gamma$ and $b:= \delta$. Clearly, $\mu=a\cdot b$, and it is an idempotent. Now, it suffices to show that this choice of $\mu, a, b$ is such that $b^2$ does not divide $\mu$ in $\langle a,b \rangle$. For contradiction, assume that $\mu = \lambda\cdot b^2\cdot \nu$ for some $\lambda, \nu\in \langle a,b\rangle$. Taking the image of both sides under the homomorphism $\phi$, we get $\phi(\mu) = \phi(\lambda)\cdot \phi(b^2)\cdot \phi(\nu)$. Note that $\phi(b^2) = (\phi(\delta))^2 = \beta^2=0$ and so, RHS $=0$; however, LHS $= \phi(\mu) = \phi((\gamma\cdot \delta)^k) = (\phi(\gamma)\cdot \phi(\delta))^k = (\alpha\beta)^k = \alpha\beta\neq $ RHS, which is a contradiction, as desired.
\end{remark}

Remark \ref{rem: existence of idempotent for which NL-hardness holds} can also be argued using the Rees matrix semigroup (as was originally proposed by the aforementioned reviewer); we present this argument below as proof of the following proposition:

\begin{proposition}
Let $M$ be a finite monoid. If $M \notin \DS$, then there exists an idempotent element $\mu \in M$ and elements $a, b \in M$ such that $a \cdot b = \mu$ and $b^2$ (or $a^2$) $<_{\calJ} \mu$ (and so, $\mu\notin Mb^2M$ or $\mu\notin Ma^2M$).
\end{proposition}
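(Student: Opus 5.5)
The plan is to use the characterisation of $\DS$ recalled in the preliminaries. Since $M\notin\DS$, there is a regular $\calD$-class $A$ whose principal factor $M(A)$ is isomorphic, by the Rees Theorem, to a Rees matrix semigroup $\mathcal{M}^0(G;I,\Lambda;P)$ with some zero entry $P_{\lambda,j}=0$. Every row and column of $P$ contains a nonzero entry, so we may pick $i\in I$ with $P_{\lambda,i}\neq 0$ and $\kappa\in\Lambda$ with $P_{\kappa,j}\neq 0$.

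We then choose the elements inside $M(A)$ and pull them back to $A\subseteq M$. Set
\[
a=(j,g,\lambda),\qquad b=(i,h,\kappa)
\]
with $g,h$ still to be chosen, and first try the roles in the order $b\cdot a$.
\begin{itemize}
\item $b\cdot a=(i,\,hP_{\kappa,j}g,\,\lambda)$ is nonzero, since $P_{\kappa,j}\neq 0$.
\item $a^2=(j,g,\lambda)(j,g,\lambda)$ involves $P_{\lambda,j}=0$, so $a^2=0$ in $M(A)$.
\end{itemize}

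In $M$, a zero product in the principal factor means the product lies in $I_A$ or outside $A$. By the definition of $M(A)$, together with $M(A)$ being the Rees quotient of $A\cup I_A$ and $a^2\le_{\calJ} a$, this gives $a^2\in I_A$, i.e.\ $a^2<_{\calJ}$ every element of $A$. For the idempotent we need $(i,x,\lambda)$ with $x P_{\lambda,i}x=x$, i.e.\ $x=P_{\lambda,i}^{-1}$. This is available because $P_{\lambda,i}\neq 0$, so the $\calH$-class $(i,\ast,\lambda)$ contains an idempotent. So set $\mu=(i,P_{\lambda,i}^{-1},\lambda)$, and choose $h=id_G$ and $g=P_{\kappa,j}^{-1}P_{\lambda,i}^{-1}$, giving $b\cdot a=\mu$.

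Renaming, the element written first is the one called $a$ in the statement: put $a':=b$ and $b':=a$. Then $a'b'=\mu$ and $b'^2<_{\calJ}\mu$, since $\mu\in A$. Finally, $\mu\notin Mb'^2M$, because otherwise $\mu\le_{\calJ}b'^2<_{\calJ}\mu$, a contradiction. Symmetrically, the other order yields the $a^2$ variant, though one of them suffices.

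The step needing the most care is the transfer from the Rees quotient back to $M$. We must check that a product of two elements of $A$ that is $0$ in $M(A)$ lies strictly $\calJ$-below $A$ in $M$, and that products landing in $A$ agree with those in $M(A)$. Both follow from the definition of the principal factor given in the preliminaries and from $xy\le_{\calJ}x$. One must also confirm that the isomorphism of the Rees Theorem preserves these products, which is immediate since it is a semigroup isomorphism fixing $0$.
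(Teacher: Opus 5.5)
Your proof is correct and is essentially the paper's argument. A zero sandwich entry $P_{\lambda,j}$, with non-zero entries $P_{\lambda,i}$ and $P_{\kappa,j}$ in its row and column, yields the idempotent of the $\mathcal{H}$-class $(i,\lambda)$ as the product of an element of $(i,\kappa)$ and an element of $(j,\lambda)$ whose square is $0$; the paper relabels so that $(i,j,\kappa,\lambda)=(1,2,2,1)$, and the lift back to $M$ is the same.

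The only difference is that you absorb arbitrary sandwich entries into the group coordinates ($\mu=(i,P_{\lambda,i}^{-1},\lambda)$, $g=P_{\kappa,j}^{-1}P_{\lambda,i}^{-1}$). The paper instead reorders classes and normalises $P_{1,1}=P_{2,2}=e$ via $q_1=r_1=q_2=r_2=e$. Your version is cleaner, and it sidesteps the fact that $r_2\in H_{2,1}$ and $q_2\in H_{1,2}$ cannot literally equal $e$.
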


\begin{proof}
Assume $M\notin\DS$. By the definition of $\DS$, there exists a regular $\calD$-class $A \subseteq M$ that is not a subsemigroup of $M$. Let $M(A) = A \cup \{0\}$ be the principal factor (see Section \ref{ec:prelims}) associated with $A$. As $M(A)$ is completely 0-simple (see Lemma 2.4 in \cite{dandan2019group}), we apply Rees theorem; that is, $M(A)$ is isomorphic to a Rees matrix semigroup (see Section \ref{ec:prelims}) of the form $\mathcal{M}^0(G; I, \Lambda; P)$ for some group $G$. First, we briefly mention the following points that are explained in the proof of Rees theorem in  \cite{How95} (see Theorem 3.2.3 therein for more details), including the choice of the group $G$, sets $I,\Lambda$ (for indexing columns and rows of sandwich matrix $P$ respectively), entries of $P$ and the isomorphism from $\mathcal{M}^0(G; I, \Lambda; P)$ to $M(A)$:

\begin{enumerate}
\item 
\label{item: 2 type H class}
Any $\mathcal{H}$-class $H$ in the $\mathcal{D}$-class $M(A)\setminus \{0\}$ (i.e., $A$) of $M(A)$ is of one of two types: i) \emph{group} $\mathcal{H}$-\emph{class} and ii) \emph{zero} $\mathcal{H}$-\emph{class}; $H$ is of first type when $x\cdot y\in H$ for $x,y\in H$ (in this case, $H$ forms a group), and $H$ is of second type when  $x\cdot y = 0$ for $x,y\in H$ (in this case, $H^2=\{0\}$).

\item 
\label{item: at least one H class}
Every $\mathcal{R}$-class and every $\mathcal{L}$-class in the $\mathcal{D}$-class $A$ of $M(A)$ contains at least one group $\mathcal{H}$-class.

\item 
\label{item: H11 is a group}
$\Lambda$ (resp. $I$) consists of all non-zero $\mathcal{L}$-classes (resp. $\mathcal{R}$-classes) of $M(A)$. Without loss of generality, the first $\mathcal{R}$-class $R_1$ and the first $\mathcal{L}$-class $L_1$ intersect in a group $H$-class $H_{1,1}$ (chosen as the group $G$); otherwise, $\mathcal{L}$ and $\mathcal{R}$-classes can be reordered in $\Lambda$ and $I$ to ensure this.

\item 
\label{item: Pick item arbitrarily}
For each $1\leq i\leq |I|$, an element $r_i$ is arbitrarily picked from the intersection of  $i^{th}$ $\mathcal{R}$-class $R_i$ and first $\mathcal{L}$-class $L_1$ (i.e., from the $\mathcal{H}$-class $H_{i,1}$). Similarly, for each $1\leq \lambda\leq |\Lambda|$, an element $q_{\lambda}$ is arbitrarily picked from the intersection of $\lambda^{th}$ $\mathcal{L}$-class $L_{\lambda}$ and first $\mathcal{R}$-class $R_1$ (i.e., from the $\mathcal{H}$-class $H_{1,\lambda}$). 

\item 
\label{item: Matrix entries}
For each $1\leq i\leq |I|$ and each $1\leq \lambda\leq |\Lambda|$, the matrix entry $P_{\lambda, i}$ is set as $q_{\lambda}\cdot r_{i}$ when $H_{i,\lambda}$ is a group $\mathcal{H}$-class; otherwise (i.e., when $H_{i,\lambda}$ is a zero $\mathcal{H}$-class), the matrix entry $P_{\lambda, i}$ is set as 0.

\item 
\label{item: isomorphism}
$\mathcal{M}^0(G;I,\Lambda;P)$ is isomorphic to $M(A)$ via the isomorphism $\phi: \mathcal{M}^0(G;I,\Lambda;P)\rightarrow M(A)$ defined as follows: For every $(i,g,\lambda)\in I\times G\times \Lambda$, $\phi(i,g,\lambda):= r_i\cdot g\cdot q_{\lambda}$, and $\phi(0):=0$.

\item 
\label{item: R-L classes}
In $M(A)$, for any $x,y\in A$, either $x\cdot y = 0$ or $x\cdot y$ is contained in the intersection of $\mathcal{R}$-class containing $x$  and $\mathcal{L}$-class containing $y$; the latter occurs if and only if there exists an idempotent in the intersection of $\mathcal{L}$-class containing $x$ and $\mathcal{R}$-class containing $y$.
\end{enumerate}
Next, for our purpose, we further reorder $\mathcal{L}$-classes in $\Lambda$ and $\mathcal{R}$-classes in $I$ such that, in addition to $H_{1,1}$ being a group $\mathcal{H}$-class (as in point \ref{item: H11 is a group} above), it is also ensured that the intersection of second $\mathcal{R}$-class $R_2$ with i) first $\mathcal{L}$-class $L_1$ is a zero $\mathcal{H}$-class $H_{2,1}$ and ii) second $\mathcal{L}$-class $L_2$ is a group $\mathcal{H}$-class $H_{2,2}$. Such a re-ordering can be done as follows:

\begin{itemize}
\item As $A$ does not form a subsemigroup of $M$, there exist $x,y\in A$ such that $x\cdot y\notin A$; so, $x\cdot y$ is not   $\mathcal{J}$-related to $x$. Also, as $M(x\cdot y)M\subseteq MxM$, $x\cdot y \leq_{\mathcal{J}} x$. Therefore, $x\cdot y <_{\mathcal{J}} x$
and so, $x\cdot y\in I_A$ in $M$. Also, in the principal factor $M(A)$ associated with $A$, $I_A$ is identified with $0$. Therefore, $x\cdot y = 0$ in $M(A)$. So, as per  point \ref{item: R-L classes} above, it follows that $L_{\lambda}\cap R_i$ (i.e., the $\mathcal{H}$-class $H_{i,\lambda}$) does not contain an idempotent, where  $L_{\lambda}$ and $R_i$ denote the $\mathcal{L}$-class containing $x$ and $\mathcal{R}$-class containing $y$ respectively. So, $H_{i,\lambda}$ cannot be a group and thus, as per point \ref{item: 2 type H class} above, it must be a zero $\mathcal{H}$-class. Swap second $\mathcal{R}$-class $R_2$ with $i^{th}$ $\mathcal{R}$-class $R_i$, and swap first $\mathcal{L}$-class $L_1$ with $\lambda^{th}$ $\mathcal{L}$-class $L_{\lambda}$. Now, $H_{2,1}$ becomes a zero $\mathcal{H}$-class.

\item If $H_{1,1}$ is not already a group $\mathcal{H}$-class, consider $i>2$ such that $H_{i,1}$ is a group $\mathcal{H}$-class; such $i$ exists because the first $\mathcal{L}$-class $L_1$ has at least one group $\mathcal{H}$-class (see point \ref{item: at least one H class} above). Swap first $\mathcal{R}$-class $R_1$ with $i^{th}$ $\mathcal{R}$-class $R_i$. Now, $H_{1,1}$ becomes a group $\mathcal{H}$-class.
\item If $H_{2,2}$ is not already a group $\mathcal{H}$-class, consider $\lambda>2$ such that $H_{2,\lambda}$ is a group $\mathcal{H}$-class; such $\lambda$ exists because the second $\mathcal{R}$-class $R_2$ has at least one group $\mathcal{H}$-class (see point \ref{item: at least one H class} above). Swap second $\mathcal{L}$-class $L_2$ with $\lambda^{th}$ $\mathcal{L}$-class $L_{\lambda}$. Now, $H_{2,2}$ becomes a group $\mathcal{H}$-class.
\end{itemize}
Further, we fix the choices for $q_1, r_1, q_2, r_2$ (see point \ref{item: Pick item arbitrarily} above) to all be $e$, where $e$ denotes the idempotent in $G$. Now, setting entries of the sandwich matrix $P$ according to point \ref{item: Matrix entries} above, we get $P_{1,2}=0$ (as $H_{2,1}$ is a zero $\mathcal{H}$-class), $P_{1,1}=q_1\cdot r_1 = e^2=e$ (as $H_{1,1}$ is a group $\mathcal{H}$-class) and $P_{2,2}=q_2\cdot r_2 = e^2 = e$ (as $H_{2,2}$ is a group $\mathcal{H}$-class). 

We now define our candidate elements $\mu, a, b$ within $M(A)$ (which are then lifted to corresponding elements in $M$) in terms of the coordinate triples $(i, g, \lambda)$ of their pre-images (under the isomorphism $\phi$ mentioned in point \ref{item: isomorphism} above) in $\mathcal{M}^0(G, I, \Lambda, P)$: $\mu := \phi(1, e, 1), \quad a := \phi(1, e, 2), \quad b := \phi(2, e, 1)$. Let us verify their algebraic products under the Rees matrix multiplication rule: Consider $\mu \cdot \mu = \phi(1, e, 1)\cdot \phi(1, e, 1) = \phi((1, e, 1)\cdot (1, e, 1)) = \phi(1, e \cdot P_{1, 1} \cdot e, 1) = \phi(1, e, 1) = \mu$. Hence, $\mu$ is an idempotent element in $M(A)$ (and so, upon lifting, also in $A\subseteq M$). Now, consider $a \cdot b = \phi(1, e, 2)\cdot \phi(2, e, 1) = \phi((1, e, 2)\cdot (2, e, 1) ) = \phi(1, e \cdot P_{2, 2} \cdot e, 1) = \phi(1, e, 1) = \mu$. Thus, $\mu$ factors exactly into $a \cdot b$. Now consider, $b^2 = b \cdot b = \phi((2, e, 1)\cdot(2, e, 1))$. Since $P_{1, 2} = 0$, this product collapses to the zero element of the principal factor associated with $A$: $b^2 = 0 \in M(A)$. Lifting these elements back to the original monoid $M$: in the quotient mapping to $M(A)$, the zero element $0$ represents the set $I_A = \{x \in M \mid x <_{\calJ} \gamma \text{ for any } \gamma \in A \}$. Therefore, $b^2 = 0$ in the principal factor implies that in the original monoid $M$, $b^2 \in I_A$, or $b^2 <_{\calJ} \gamma \text{ for any } \gamma \in A$. Since $\mu \in A$, we have successfully shown $b^2 <_{\calJ} \mu$, which implies $\mu \notin M b^2 M$. This concludes another argument for Remark \ref{rem: existence of idempotent for which NL-hardness holds}.
\end{proof}

\section{Reachability with Arbitrary Accepting Sets}

The most interesting general case of this problem is when the accepting subset can be an arbitrary subset of the monoid $M$. Our results in this direction are (a) for all commutative monoids (Section~\ref{subsec: Logspace Algorithms for Commutative Monoids}), thus generalising the logspace algorithm for all commutative aperiodic monoids in \cite{RSS19},  (b) for certain subclasses of union-of-groups 
(Section~\ref{subsec: Logspace Algorithms for Union-of-Groups Monoids}), thus generalising the known algorithm for the case when the monoid is a group by itself, and (c) for the monoids $\BA_2$ and $\U$ (Section~\ref{subsec: Dichotomy for BA_2 and U with Different Accepting Sets}), which are the building blocks of non-commutative aperiodic monoids.

\subsection{Logspace Algorithms for Commutative Monoids}
\label{subsec: Logspace Algorithms for Commutative Monoids}
In the following theorem, we show a logspace algorithm for all commutative monoids with arbitrary accepting subsets.

\commutativeArbitraryAcceptance*

\begin{proof}
Let $\alpha_1, \ldots, \alpha_k$ denote the elements of $M$. Let us describe a logspace algorithm that solves $\textsc{Reach}_{M,\{\alpha\}}$ for any accepting element $\alpha\in M$. For any $1\leq i\leq k$, consider the sequence consisting of all powers of $\alpha_i$, i.e., $\alpha_i^{0}, \alpha_i^{1}, \alpha_i^{2}, \ldots$ and so on. As $M$ is finite, this sequence must have a repeated element; consider the least $j_i$ such that the $j_i^{th}$ term $\alpha_i^{j_i}$ is repetition of some term (say, $k_i^{th}$ term $\alpha_i^{k_i}$) amongst the first $j_i-1$ terms of the sequence. That is, $\alpha_i^{k_i} = \alpha_i^{j_i}$. Then, the sequence takes the form $\alpha_i^{0}, \alpha_i^{1}, \ldots, \alpha_i^{k_i-1}$ followed by repetitions of  $\alpha_i^{k_i}, \alpha_{i}^{k_i+1},\ldots, \alpha_i^{j_i-1}$. 

Because of commutativity, the yield of any $s$-$t$ walk in the input graph $G$ is uniquely determined by the number of occurrences (say $n_i$) of $\alpha_i$ as an edge label in the walk for each $1\leq i\leq k$. Also, for each $1\leq i\leq k$, if $n_i$ is $\geq k_i$; then what matters is only the modulo $(j_i-k_i)$ difference between $n_i$ and $k_i$. Now, we iterate over all tuples $(n_1', \ldots, n_k')\in \underset{1\leq i\leq k}{\bigtimes} \{0,1 , \ldots, k_i-1, k_i, \ldots, j_i-1\}$ such that the accepting element $\alpha$ can be expressed as $\alpha_1^{n_1'}\alpha_2^{n_2'}\ldots \alpha_k^{n_k'}$. There is only a constant number of such tuples. Then, for each such tuple $(n_1', \ldots, n_k')$, it remains to check whether there is an $s$-$t$ walk $P$ in $G$ such that for each $1\leq i\leq k$, (a) if $n_i'\leq k_i-1$, then $P$ contains exactly $n_i'$ edges labelled $\alpha_i$, and (b) if $k_i \leq n_i'\leq j_i-1$, then $P$ contains $k_i$ edges labelled $\alpha_i$ plus a few more edges labelled $\alpha_i$ whose count modulo $(j_i-k_i)$ is $n_i'-k_i$. Next, we guess (i.e., iterate over all possible choices of) these $n_i'$ edges $\big($from (a)) for all $1\leq i\leq k$ such that $n_i'\leq k_i-1\big)$ and $k_i$ edges $\big($from (b)) for all $1\leq i\leq k$ such that $k_i \leq n_i'\leq j_i-1\big)$ in the order in which they appear in the desired walk, say $(u_1, v_1), (u_2, v_2)\ldots, (u_N, v_N)$, where $N:= \underset{\substack{1\leq i\leq k:\\ n_i'\leq k_i-1}}{\sum}n_i' + \underset{\substack{1\leq i\leq k:\\ n_i'\geq k_i}}{\sum}k_i$. There are only polynomially many such guesses (and so, the counter needed to iterate over them uses only logarithmic space). 

Then, for each such guess, it remains to check whether there are $s-u_1$, $v_1-u_2$, $v_2-u_3$, $\ldots$, $v_{N-1}-u_N$, $v_N-t$ walks in $G$ such that these $N+1$ walks together consist of $n_i'-k_i$ edges modulo $(j_i-k_i)$ labelled $\alpha_i$ for each $1\leq i\leq k$ such that $n_i'\geq k$. We further guess how these counts modulo $(j_i-k_i)$ are distributed across the $N+1$ walks. That is, for each $1\leq i\leq k$ such that $n_i'\geq k$, we iterate over all $\mu_{i,1}, \mu_{i,2}, \ldots, \mu_{i, N+1}\in \{0,1, \ldots, j_i-k_i-1\}$ such that $\sum_{\lambda=1}^{N+1}\mu_{i,\lambda} \equiv n_i'-k_i ~(mod~(j_i-k_i))$. There is only a constant number of such guesses. Then, for each such guess, it remains to check for each $1\leq \lambda\leq N+1$, whether there is a $v_{\lambda-1}-u_{\lambda}$ walk in $G$ that consists of $\mu_{i,\lambda}$ many edges (modulo $j_i-k_i$) labelled $\alpha_i$ for each $1\leq i\leq k$ such that $n_i'\geq k$. Let $\alpha_{i_1}, \alpha_{i_2}, \ldots, \alpha_{i_r}$ denote all $\alpha_i$'s of the latter kind. Discard all edges of $G$ whose labels are different from any of these $r$ elements. Also, for each remaining edge, replace its label $\alpha_{i_q}$ (where $1\leq q\leq r$) with a $r$-tuple $(0,0,\ldots, 1, \ldots,0, 0)$ from the group $H:=\mathbb{Z}_{j_{i_1}-k_{i_1}}\times \ldots \times \mathbb{Z}_{j_{i_r}-k_{i_r}}$, where 1 is placed at the position $q$ in the $r$-tuple. In the resulting graph, the task now reduces to check for each $1\leq \lambda\leq N+1$, whether there is a $v_{\lambda-1}-u_{\lambda}$  walk whose yield is $(\mu_{i_1,\lambda}, \ldots, \mu_{i_r, \lambda})$; to do so, use the logspace algorithm \cite{RSS19} over group $H$. This proves Theorem~\ref{thm: logspace algo commutative monoids}.
\end{proof}

\subsection{Logspace Algorithms for Union-of-Groups Monoids}
\label{subsec: Logspace Algorithms for Union-of-Groups Monoids}
Recall the definition of $\calL$-commutative and $\calR$-commutative monoids from the preliminaries. In the theorem below, we show deterministic logspace algorithms for any accepting subset for such union-of-groups monoids.

\subsubsection{$\calL$- (or $\mathcal{R}$-) Commutative Union-of-Groups Monoids}
\label{subsubsec: Over L- (or R-) Commutative Union-of-Groups Monoids}
\firstlcommutativealgo*

\begin{proof}
Assume that $M$ is $\calL$-commutative; the proof for the case when $M$ is $\calR$-commutative would be similar. Let $G'$ denote the product graph of the input graph $G(V, E)$.  Note that $V(G')$ can be partitioned into parts $V\times H$ for all $\mathcal{H}$-classes $H$ of $M$. We use the following property of $\calL$-commutative (and respectively $\calR$-commutative) monoids.

\begin{lemma}[\cite{Gig20}]
\label{L-commutativity same as R-classes coinciding with H-classes lemma}
For any $\calL$-commutative (resp. $\calR$-commutative) monoid $M$, the $\mathcal{R}$-classes (resp. $\mathcal{L}$-classes)  must coincide with its $\mathcal{H}$-classes. 
\end{lemma}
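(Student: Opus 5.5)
We need to prove: if $M$ is $\calL$-commutative, then $\calR$-classes coincide with $\calH$-classes, i.e., $a \mathrel{\calR} b$ implies $a \mathrel{\calL} b$. Since $\calH \subseteq \calR$ always, it suffices to show that whenever $a \mathrel{\calR} b$ we also have $a \mathrel{\calL} b$.

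The plan is to use the $\calR$-relation to write $b = ax$ and $a = by$ for some $x,y\in M$. Then $\calL$-commutativity converts right multiplication into left multiplication up to $\calL$-equivalence.

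First, from $b = ax$, $\calL$-commutativity gives $b = ax \mathrel{\calL} xa$, so $b \in M(xa) \subseteq Ma$, i.e. $b \le_{\calL} a$. Symmetrically, $a = by \mathrel{\calL} yb$, so $a \le_{\calL} b$. Hence $a \mathrel{\calL} b$, and together with $a\mathrel{\calR} b$ this gives $a \mathrel{\calH} b$.

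The $\calR$-commutative case is the mirror image: from $a \mathrel{\calL} b$ write $b = xa$ and $a = yb$. Then $b \mathrel{\calR} ax$, so $b \le_{\calR} a$, and symmetrically $a \le_{\calR} b$. Hence $\calL$-classes coincide with $\calH$-classes.

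There is no real obstacle. The only point requiring care is the direction of the preorders: $ax \mathrel{\calL} xa$ gives $ax \in Mxa$, and $Mxa \subseteq Ma$ because $Ma$ is a left ideal. Finiteness and the union-of-groups hypothesis are not needed for this lemma.
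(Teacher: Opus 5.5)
Your proof is correct and is essentially the paper's argument. Like the paper, you write $b = ax$ and $a = by$, use $\calL$-commutativity to get $b \mathrel{\calL} xa$ and hence $b \le_{\calL} a$ (and symmetrically $a \le_{\calL} b$), and conclude $a \mathrel{\calH} b$; the only difference is that you also spell out the mirror $\calR$-commutative case, which the paper leaves as "similar."
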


\begin{proof}
We present a proof for completeness, and only for the case when $M$ is $\calL$-commutative (the proof for $\calR$-commutative is similar). Consider any $a, b\in M$ from the same $\mathcal{R}$-class. We argue that $a$ and $b$ must also be from the same $\mathcal{H}$-class. As $a$ and $b$ are in the same $\mathcal{R}$-class, there exist $c, d\in M$ such that $a\cdot c= b$ and $b\cdot d = a$. Now, as $M$ is $\calL$-commutative, both $ac$ (same as $b$) and $ca$ belong to the same $\calL$-class; so, there exists $e\in M$ such that $e \cdot ca = b$. So, as $ec \cdot a = b$, we have $b \leq _{\calL} a $.  Similarly, as $M$ is $\calL$-commutative, both $bd$ (same as $a$) and $db$ belong to the same $\calL$-class; so, there exists $f\in M$ such that $f \cdot db = a$. So, as $fd\cdot  b =a$, we have $a \leq _{\calL} b$. Therefore, we have $a \calL b$; but then, as $a$ and $b$ also belong to the same $\mathcal{R}$-class, this gives $a \mathcal{H} b$, as desired. This proves Lemma~\ref{L-commutativity same as R-classes coinciding with H-classes lemma}. 
\end{proof}

The rest of the proof is in two steps. We first prove that, for any $\mathcal{H}$-class $H$, once a path $P$ in $G'$ exits the part $V\times H$, $P$ cannot re-visit $V\times H$. We then show that $G'[V \times H]$ is an Eulerian subgraph of $G'$, and hence the reachability within the subgraph can be tested in deterministic logspace.

\begin{lemma}
\label{no re-visit lemma}
Once a path $P$ in $G'$ exits the part $V\times H$ for any $\mathcal{H}$-class $H$, $P$ cannot re-visit $V\times H$. 
\end{lemma}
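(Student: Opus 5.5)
Note first that along any path in $G'$ the monoid coordinate only moves by right multiplication. An edge $((u,m),(v,m'))$ has $m' = m\cdot\phi(\{u,v\})$, so every vertex visited after $(u,m)$ has second coordinate of the form $m\cdot x$ for some $x\in M$. Hence the second coordinates along $P$ form a sequence $m_0, m_1, m_2,\ldots$ with $m_{j}\le_{\calR} m_i$ whenever $j\ge i$.

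Now suppose $P$ is in $V\times H$ at some point, later leaves it, and later still returns. Say the second coordinates are $m_i\in H$, then $m_j\notin H$, then $m_\ell\in H$, with $i<j<\ell$. Monotonicity gives
\[ m_\ell \le_{\calR} m_j \le_{\calR} m_i . \]
Since $m_i$ and $m_\ell$ lie in the same $\calH$-class, they are in particular $\calR$-related, so $m_i\le_{\calR} m_\ell$. The chain therefore closes up: $m_j\le_{\calR} m_i\le_{\calR} m_\ell\le_{\calR} m_j$, which gives $m_j \mathrel{\calR} m_i$.

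Here I invoke Lemma~\ref{L-commutativity same as R-classes coinciding with H-classes lemma}: in an $\calL$-commutative monoid the $\calR$-classes coincide with the $\calH$-classes. So $m_j\mathrel{\calR} m_i$ forces $m_j\mathrel{\calH} m_i$, i.e.\ $m_j\in H$, contradicting the choice of $m_j$. This proves the lemma in the $\calL$-commutative case, and the argument does not even use the union-of-groups hypothesis.

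The $\calR$-commutative case is the step that needs care, because the roles do not simply swap: in $G'$ multiplication is always on the right, so monotonicity is with respect to $\le_{\calR}$ no matter which commutativity is assumed. With $\calR$-commutativity, Lemma~\ref{L-commutativity same as R-classes coinciding with H-classes lemma} tells us instead that $\calL$-classes equal $\calH$-classes. The chain argument still yields $m_j\mathrel{\calR} m_i$, but that alone no longer gives $\calH$-equivalence. I would fill the gap with the union-of-groups structure. We have $m_j = m_i x$ and $m_i\mathrel{\calR} m_j$. In a completely regular monoid, $\calR$-related elements are $\calD$-related, and I expect $\calR$-commutativity to collapse each $\calD$-class to a single $\calL$-class, so that $\calR$ together with $\calD$ implies $\calH$.

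If that step resists, the fallback is the one the paper itself signals ("the proof \ldots would be similar"): handle the $\calR$-commutative case by the mirror-image argument, working with the reversed product graph (multiplying labels on the left, i.e.\ reading walks from $t$ back to $s$). In that graph the coordinates are monotone in $\le_{\calL}$, and the same chain argument goes through verbatim with $\calL$ in place of $\calR$.
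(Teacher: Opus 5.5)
Your argument for the $\calL$-commutative case, which is the case the paper actually proves, is correct and is essentially the paper's proof. Right multiplication makes the second coordinate $\le_{\calR}$-monotone along $P$, returning to $H$ closes the chain into an $\calR$-equivalence, and Lemma~\ref{L-commutativity same as R-classes coinciding with H-classes lemma} upgrades it to $\calH$. The only difference is how you close the chain: you use $m_i \mathrel{\calR} m_\ell$ (both lie in $H$), whereas the paper writes $h=\tilde h\cdot m'\cdot (h')^{-1}h$ using the group inverse in $H$. As you note, your version shows the union-of-groups hypothesis is not needed for this lemma.

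Your primary plan for the $\calR$-commutative case would fail, and no argument along those lines can succeed: the analogue of the lemma for $G'$ is false there. The collapse you expect is backwards. $\calR$-commutativity gives $\calL=\calH$, so each $\calD$-class is a single $\calR$-class, and an $\calR$-equivalence between $m_i$ and $m_j$ tells you nothing.

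Here is a concrete counterexample. Let $M=\{1,e,f\}$ with identity $1$ and $xy=y$ for $x,y\in\{e,f\}$.
\begin{itemize}
\item Every element is idempotent, and the $\calH$-classes are $\{1\},\{e\},\{f\}$.
\item $ef=f\mathrel{\calR}e=fe$, so $M$ is an $\calR$-commutative union of groups.
\item $M$ is not $\calL$-commutative, since $Me=\{e\}\neq\{f\}=Mf$.
\end{itemize}
Let $G$ be the path $s,u,v,t$ with edge labels $e,f,e$. Then $(s,1)\to(u,e)\to(v,f)\to(t,e)$ is a path in $G'$ that leaves $V\times\{e\}$ and re-enters it.

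Your fallback is the correct repair, and presumably what the paper's ``similar'' has to mean. Use the reversed product graph with edges $(v,m)\to(u,\phi(\{u,v\})\cdot m)$. There, a path from $(t,id)$ to $(s,m)$ corresponds to an $s$-$t$ walk of yield $m$, the coordinates are $\le_{\calL}$-monotone, and $\calL=\calH$ forbids re-visits. You should present this as a mirrored lemma about that graph, not as a proof of the stated lemma for $G'$. You should also drop the $\calD$-class claim. Finally, the Eulerian lemma must be re-proved for the reversed graph; the mirror-image argument goes through.
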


\begin{proof}
    For the sake of contradiction, assume that $P$ exits the part $V\times H$ (say, at vertex $(v,h)$), enters the part $V\times \tilde{H}$ for another $\mathcal{H}$-class $\tilde{H}$ (say, at vertex $(w,\tilde{h})$), and then eventually re-enters the part $V\times H$ (say, at vertex $(v', h')$) (see fig.~\ref{fig:path revisiting}). Now, as $(w, \tilde{h})$ is reachable from $(v, h)$ in $G'$, there exists an $m\in M$ such that $h\cdot m = \tilde{h}$ and so, $\tilde{h}\leq _{\mathcal{R}} h$. Also, as $(v', h')$ is reachable from $(w, \tilde{h})$ in $G'$, there exists an $m'\in M$ such that $\tilde{h}\cdot m' = h'$; so, $\tilde{h}\cdot (m'\cdot ((h')^{-1}\cdot h)) = h$ and thus, $h\leq _{\mathcal{R}} \tilde{h}$. \\

    \begin{figure}
            \centering
            \includegraphics[scale=0.15, alt = {This figure is with reference to the proof of Lemma \ref{no re-visit lemma}; here, it has been assumed (for contradiction) that path $P$ exits the part $V\times H$, enters  the part $V\times \tilde{H}$, and eventually re-enters the part $V\times H$.}]{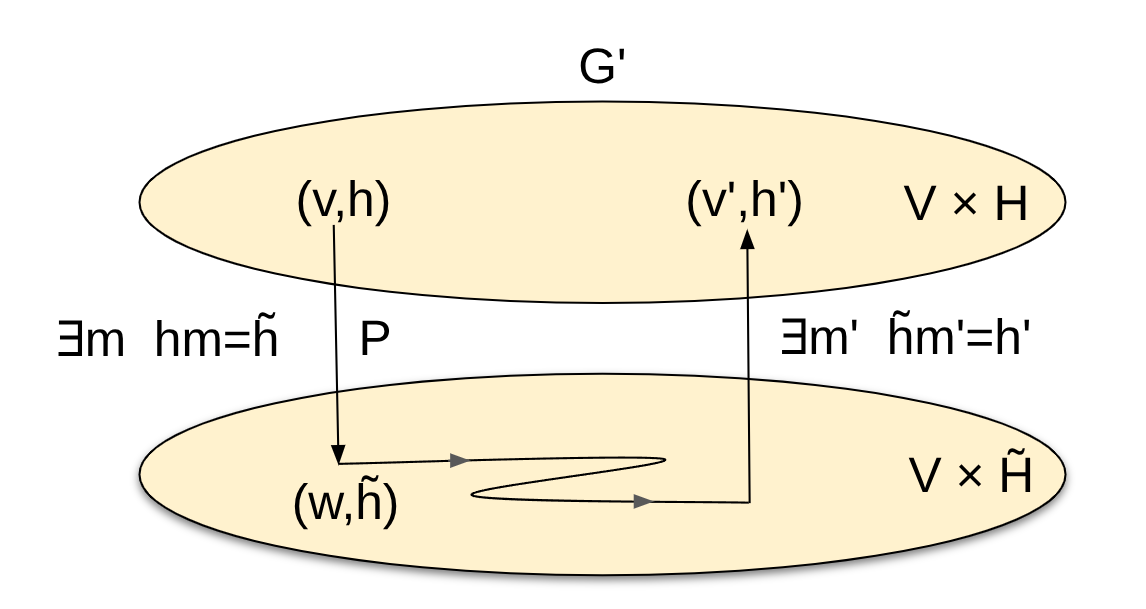}
            \caption{Path re-visiting the $V \times H$ part}
            \label{fig:path revisiting}
        \end{figure}

    Therefore, $h~\mathcal{R}~\tilde{h}$ and so, as the $\mathcal{R}$ classes of $M$ coincide with its $\mathcal{H}$ classes (from Lemma~\ref{L-commutativity same as R-classes coinciding with H-classes lemma}), we get $h~\mathcal{H}~\tilde{h}$. However, $h$ and $\tilde{h}$ are from different $\mathcal{H}$-classes (i.e., $H$ and $\tilde{H}$ respectively), a contradiction. This proves Lemma~\ref{no re-visit lemma}. 
\end{proof}

\begin{lemma}
\label{eulerian pieces lemma}
Let $H$ be any group (i.e., $\mathcal{H}$-class) of $M$. Then,  $G'[V\times H]$ is an Eulerian subgraph of $G'$.
\end{lemma}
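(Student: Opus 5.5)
The plan is to show directly that in the induced subgraph $G'[V\times H]$ every vertex has in-degree equal to out-degree. The idea is to set up a bijection between the out-edges and the in-edges at each vertex, using the group structure of $H$ together with the undirectedness of $G$.

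Fix a vertex $(u,h)$ with $h\in H$. For each undirected edge $\{u,v\}\in E$ with label $m=\phi(\{u,v\})$, there is exactly one out-edge of $G'$ from $(u,h)$ along this edge, namely to $(v,h\cdot m)$. This out-edge lies in $G'[V\times H]$ if and only if $h\cdot m\in H$. I claim that, because $H$ is a group, the condition $h\cdot m\in H$ does not depend on which $h\in H$ we pick. Indeed, let $e_H$ be the identity of $H$, so that $h\cdot m = h\cdot(e_H\cdot m)$. Then $h\cdot m\in H$ iff $e_H\cdot m\in H$: if $e_H m\in H$, the product $h(e_Hm)$ lies in $H$; conversely, $e_H m = h^{-1}(hm)$, with $h^{-1}$ the inverse of $h$ in $H$. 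Call the edge $\{u,v\}$ \emph{$H$-good} if $e_H\cdot m\in H$.

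Now count in-edges at $(u,h)$. Along the same undirected edge $\{u,v\}$, an in-edge comes from a vertex $(v,h')$ with $h'\in H$ and $h'\cdot m = h$. If the edge is $H$-good, then $h' = h\cdot (e_H m)^{-1}$, the inverse taken in $H$, is the unique solution. Existence: $h(e_Hm)^{-1}\cdot m = h(e_Hm)^{-1}(e_Hm)=h$, using $h'=h'e_H$. Uniqueness: $h'e_Hm = h'm = h$ forces $h'=h(e_Hm)^{-1}$ in the group $H$. If the edge is not $H$-good, no $h'\in H$ satisfies $h'm\in H$, by the independence claim above, so there is no such in-edge. Hence each $H$-good edge at $u$ contributes exactly one out-edge and exactly one in-edge at $(u,h)$ inside $V\times H$, and every other edge contributes neither. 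So in-degree equals out-degree.

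The main point to handle carefully is the independence claim, that $hm\in H$ depends only on $e_Hm$. The only subtlety is multiplicities: parallel edges and self-loops of $G$ should be counted edge by edge, which the argument above already does. Note that the union-of-groups hypothesis is used only through the fact that $H$ is a group.
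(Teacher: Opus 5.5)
Your proof is correct and takes essentially the same approach as the paper: you balance in- and out-degree at each vertex $(v,h)$ edge by edge, splitting on whether $h\lambda\in H$, and use the group structure of $H$ to find the unique in-neighbour. Your in-neighbour $h(e_H\lambda)^{-1}$ equals the paper's $h(h\lambda)^{-1}h$, and your independence claim ($h\lambda\in H$ iff $e_H\lambda\in H$) is a slightly cleaner packaging of the paper's Case~2 argument.
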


\begin{proof}
It suffices to show that the in-degree of any vertex matches its out-degree in $G'[V\times H]$. Consider any $(v, h)\in V\times H$. Note that the only edges of $G$ due to which edges get added (in the product graph $G'$) at the vertex $(v,h)\in V\times H$ are the ones incident to the vertex $v$ in $G$. Also, for any two such distinct edges (say, $\{v,w\}$ and $\{v,w'\}$, where $w$ and $w'$ are distinct neighbors of $v$ in $G$), the corresponding added edges at  $(v,h)$  would be such that their other endpoints would be of the form  $(w,\_)$ and $(w',\_)$ respectively; that is, the first component of these endpoints differs. So, $\{v,w\}$ and $\{v,w'\}$ cannot be the cause of the same edge incident at $(v,h)$. Thus, it now suffices to individually argue, for each edge $e=\{v,w\}\in E$ (say, with label $\lambda\in M$) incident to $v$ in $G$, that the numbers of in-edges and out-edges added (in $G'[V\times H]$)  at vertex $(v,h)$ due to edge $e$ are equal.

\begin{description}
    \item[\textbf{Case 1}: $h\lambda \in H$:] 
    Then, due to $e$, there's clearly exactly one out-edge added to $(v,h)$ in $G'[V\times H]$; namely, from $(v,h)$ to $(w, h\lambda)$. Also, due to $e$, there's at least one in-edge added to $(v,h)$ in $G'[V\times H]$; namely, from $(w, h\cdot (h\lambda)^{-1}\cdot h)$ to $(v,h)$ respectively. Now, let us argue that this is the only in-edge added to $(v,h)$ in $G'[V\times H]$ due to $e$. For contradiction, suppose there exists $\mu\neq h\cdot (h\lambda)^{-1}\cdot h$ in $H$ such that there's an edge from $(w, \mu)$ to $(v,h)$ added due to $e$ in $G'[V\times H]$. Then, $\mu\cdot \lambda= h$. Writing $\mu$ as $\mu\cdot h^{-1}\cdot h$, we get $(\mu\cdot h^{-1})\cdot (h\cdot \lambda) = h$. Post-multiplying both sides by $(h\cdot \lambda)^{-1}\cdot h$, we get $\mu = h\cdot (h\lambda)^{-1}\cdot h$, a contradiction.
    \item[\textbf{Case 2}: $h\lambda \not\in H$:]
    Then, due to $e$, there's clearly no out-edge added to $(v,h)$ in $G'[V\times H]$. Let's argue that due to $e$, no in-edge is added either to $(v,h)$ in $G'[V\times H]$. For contradiction, assume that due to $e$, there's an edge from $(w, \mu)$ to $(v,h)$ added in $G'[V\times H]$. Then, $\mu\cdot \lambda = h$. Pre-multiplying both sides by $h\cdot \mu^{-1}$ gives $h\cdot \lambda = h\cdot \mu^{-1}\cdot h$, which $\in H$, a contradiction.
\end{description}    
    Thus, in both cases, the contributions to in-degree and out-degree of $(v,h)$ in $G'[V\times H]$ are equal.
    This proves Lemma~\ref{eulerian pieces lemma}. 
\end{proof}

\noindent{\em Proof of Theorem~\ref{Logspace algo when R-classes = H-classes} using Lemma~\ref{no re-visit lemma} and~\ref{eulerian pieces lemma}:}
Now, the graph $G$ has an $s$-to-$t$ walk of yield $\in F$  if and only if its product graph $G'$ has an $(s, id)$ to $(t,m)$ path for some $m\in F$, where $id$ denotes the monoid identity of $M$. Due to Lemma~\ref{no re-visit lemma}, this path (if existent) must be such that for every $\mathcal{H}$-class $H$, it either never visits the part $V\times H$, or enters/leaves the part $V\times H$ only once. Guess (i.e., iterate over all possible choices of) the $\mathcal{H}$-classes $H$ whose corresponding $V\times H$ parts are visited, and also guess the edges involved in the jumps across these parts. As the number of $\mathcal{H}$-classes is $\mathcal{O}(1)$, there are only $n^{\mathcal{O}(1)}$ guesses (and so, the counter needed to iterate over them uses only $\mathcal{O}(\log n)$ space). For each such guess, the task now remains to check for each visited part $V\times H$, whether there is a path in $G'[V\times H]$ from the vertex at which the guessed edge entering into $V\times H$ lands, to the vertex at which the guessed edge exiting $V\times H$ departs from; this can be done using the logspace algorithm for Directed Reachability on Eulerian graphs because $G'[V\times H]$ is Eulerian, as shown in Lemma~\ref{eulerian pieces lemma}. This proves Theorem~\ref{Logspace algo when R-classes = H-classes}. 
\end{proof}

\begin{remark}
\label{R classes coincide with H classes implies L commutativity remark}
From Lemma~\ref{L-commutativity same as R-classes coinciding with H-classes lemma}, we know that for any union-of-groups monoid $M$, $\calL$-commutativity implies coincidence of $\mathcal{R}$-classes with $\mathcal{H}$-classes. In fact, the converse is also true. That is, if $\mathcal{R}$-classes of $M$ coincide with its $\mathcal{H}$-classes, then we can show that $M$ is $\calL$-commutative as follows: Suppose this is not the case. Then, there exist $a, b\in M$ such that $ab$ and $ba$ belong to different $\calL$-classes (say $L_1$ and $L_2$) respectively.  Let $H_1\subseteq L_1$ and $H_2\subseteq L_2$ denote the $\mathcal{H}$-classes containing $ab$ and $ba$ respectively. 
Note that $((ab)^{-1}\cdot a) \cdot (bab) =  ab$ and thus, $ab \leq _{\calL} bab$. Also, as $b\cdot ab = bab$, we have $bab \leq_{\calL} ab$. Thus, $bab$ belongs to the same $\calL$ class as $ab$ (i.e., $L_1$). Next, as $(bab)\cdot (a \cdot (ba)^{-1}) =  ba$, we get $ba \leq _{\mathcal{R}} bab$. Also, as $ba\cdot b = bab$, we get $bab \leq_{\mathcal{R}} ba$. Therefore, $bab$ and $ba$ belong to the same $\mathcal{R}$-class. So, as $\mathcal{R}$-classes coincide with $\mathcal{H}$-classes, both $bab$ and $ba$ belong to the same $\mathcal{H}$-class (i.e., $H_2$). However, $bab \in L_1$ is inconsistent with $bab\in H_2$, a contradiction.
\end{remark}

\begin{example}
Consider the full transformation monoid of $\{1,2\}$, i.e., the monoid consisting of all maps from $\{1,2\}$ to itself with function composition as the monoid operation. This union-of-groups monoid is $\calL$-commutative because it has the following same set of $\mathcal{R}$-classes and $\mathcal{H}$-classes: 
\begin{equation*}
\bigg\{\Big\{a = \left(\begin{array}{c}1\mapsto 2\\ 2\mapsto 1\end{array}\right), a^2 = \left(\begin{array}{c}1\mapsto 1\\ 2\mapsto 2\end{array}\right)\Big\}, \Big\{e= \left(\begin{array}{c}1 \mapsto 1\\ 2 \mapsto 1\end{array}\right)\Big\}, \Big\{ae = \left(\begin{array}{c}1 \mapsto 2\\ 2 \mapsto 2\end{array}\right)\Big\}\bigg\}.
\end{equation*}

\end{example}

\begin{remark}
\label{Clifford-ness implies L-commutativity}
It is known that Clifford monoids are both $\calL$-commutative and $\calR$-commutative (see Theorem 3.13 in \cite{mary2014classes}). We can show that any Clifford monoid $M$ is $\calL$-commutative (and by a similar argument, also $\calR$-commutative) as follows: Suppose this is not the case. Then, as per Remark~\ref{R classes coincide with H classes implies L commutativity remark}, $M$ has an $\mathcal{R}$-class that contains two of its distinct $\mathcal{H}$-classes $H_1$ and $H_2$, say with idempotent elements (same as group identities) $e_1$ and $e_2$ respectively. As $e_1$ and $e_2$ belong to the same $\mathcal{R}$-class, there exist $x, y\in M$ such that $e_1\cdot x = e_2$ and $e_2\cdot y = e_1$. Since the idempotent elements $e_1$ and $e_2$ commute with all elements of $M$ (because $M$ is a Clifford monoid), we can rewrite these as $x\cdot e_1 = e_2$ and $y\cdot e_2 = e_1$. So, apart from being $\mathcal{R}$-related, $e_1$ and $e_2$ are also $\calL$-related (and thus, $\mathcal{H}$-related) to each other, a contradiction.
\end{remark}
\vspace{-5mm}

\subsubsection{When $\mathcal{R}$-classes other than Unit Group are Right Ideals}
\label{subsubsec:When R-classes other than Unit Group are Right Ideals}
We now handle some interesting special cases where $\mathcal{R}$-classes other than the unit group are right ideals and the accepting subset $F$ is any of these $\cal{R}$-classes.

\firstrightidealalgo*

\begin{proof} Suppose that the accepting set $F$ is any $\mathcal{R}$-class (say $R$) other than $\tilde{G}$. We prove the following:
\begin{claim}
There is an $s$-$t$ walk of yield $\in R$ if and only if there exists an edge $e$ with label $\in M\setminus \tilde{G}$ such that (a) there is a walk $P_1$ of yield $\in \{g\in \tilde{G}~|~g\cdot \phi(e)\in R\}$ from $s$ to one endpoint of $e$, and (b) there is a path $P_2$ (of any yield) from the other endpoint of $e$ to $t$.
 \end{claim}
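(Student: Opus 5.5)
The plan is to prove the two directions of the claim separately. The only structural facts needed are: (i) a product of elements of the unit group $\tilde{G}$ lies in $\tilde{G}$; (ii) a product $g\cdot a$ with $g\in\tilde{G}$ and $a\notin\tilde{G}$ lies outside $\tilde{G}$, since otherwise $a=g^{-1}(ga)\in\tilde{G}$; (iii) every $\mathcal{R}$-class other than $\tilde{G}$ is a right ideal.

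\emph{($\Leftarrow$)} Suppose $e=\{u,v\}$ has label $\phi(e)\notin\tilde{G}$. Let $P_1$ be an $s$-$u$ walk of yield $g\in\tilde{G}$ with $g\cdot\phi(e)\in R$, and let $P_2$ be a $v$-$t$ path of some yield $y\in M$. Concatenating $P_1$, $e$ and $P_2$ gives an $s$-$t$ walk of yield $(g\cdot\phi(e))\cdot y$. Since $g\cdot\phi(e)\in R$ and $R$ is a right ideal by hypothesis, this yield lies in $R$.

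\emph{($\Rightarrow$)} Let $P$ be an $s$-$t$ walk with yield $r\in R$, where $R\neq\tilde{G}$. If every edge label of $P$ were in $\tilde{G}$, then $r\in\tilde{G}$ by (i). Since $\tilde{G}$ is itself an $\mathcal{R}$-class, $\tilde{G}$ and $R$ are disjoint, so this is impossible. Hence $P$ contains an edge with label in $M\setminus\tilde{G}$.
\begin{itemize}
\item Let $e$ be the first such edge, traversed from endpoint $u$ to endpoint $v$. Let $P_1$ be the prefix up to $u$, and $P_2'$ the suffix from $v$ to $t$, with yield $y$.
\item By (i), $P_1$ has yield $g\in\tilde{G}$, and $r=x\cdot y$ where $x:=g\cdot\phi(e)$.
\item By (ii), $x\notin\tilde{G}$. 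So $x$ lies in some $\mathcal{R}$-class $R'\neq\tilde{G}$, and $R'$ is a right ideal by (iii).
\item Therefore $r=x\cdot y\in R'$. As $\mathcal{R}$-classes are disjoint and $r\in R$, we get $R'=R$, i.e., $g\cdot\phi(e)\in R$. This is condition (a).
\item The suffix $P_2'$ is a $v$-$t$ walk in $G$, so $G$ also contains a $v$-$t$ path $P_2$ (shortcut repeated vertices). This is condition (b); since (b) allows any yield, the yield of $P_2$ does not matter.
\end{itemize}

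The main point needing care is identifying the $\mathcal{R}$-class of $g\cdot\phi(e)$ using the right-ideal hypothesis. Everything else is bookkeeping. For the eventual algorithm, one would iterate over the polynomially many edges $e$ and both orientations. Condition (a) is then checked by running the \cite{RSS19} group algorithm on the subgraph of edges with labels in $\tilde{G}$, using the constant-size accepting set $\{g\in\tilde{G} : g\cdot\phi(e)\in R\}$. Restricting to that subgraph is harmless, because the chosen prefix only uses such edges. Condition (b) is checked by Reingold's algorithm \cite{Rei04}.
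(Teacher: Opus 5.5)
Your proof is correct and follows essentially the same route as the paper: you concatenate and use the right-ideal property for $(\Leftarrow)$, and for $(\Rightarrow)$ you take the first edge $e$ labelled outside $\tilde{G}$ and show $g\cdot\phi(e)\in R$ via the right-ideal property of whichever $\mathcal{R}$-class contains it. The paper reaches that last point by contradiction, splitting into the cases $g\cdot\phi(e)\in\tilde{G}$ and $g\cdot\phi(e)$ in some third $\mathcal{R}$-class, whereas you argue directly (and you also make explicit the harmless walk-to-path shortcut for condition (b)).
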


\begin{proof}
 $(\Leftarrow)$: Suppose that there is an edge $e$ with label $\in M\setminus \tilde{G}$ such that there is a walk $P_1$ of yield $g\in \tilde{G}$ satisfying $g\cdot \phi(e)\in R$ from $s$ to one endpoint of $e$, and there is a path $P_2$ (of any yield) from the other endpoint of $e$ to $t$. Then, concatenating $P_1$, $e$ and $P_2$ gives an $s$-$t$ walk whose yield is $g\cdot \phi(e)\cdot \phi(P_2)$, which is in $R$ because $g\cdot \phi(e)\in R$ and right-multiplying it with any element of $M$ always gives an element of $R$ (due to $R$ being a right ideal). 

$(\Rightarrow)$: Suppose that there is an $s$-$t$ walk $P$ of yield $\in R$. Note that $P$ must have at least one edge with label $\in M\setminus \tilde{G}$ (as any product of elements from just $\tilde{G}$ can never land outside $\tilde{G}$). Let $e$ be the first edge in $P$ which is labelled by an element of $M\setminus \tilde{G}$; that is, all edges before $e$ in $P$ are labelled by elements of $\tilde{G}$. Let $P_1$ and $P_2$ denote the subwalks of $P$ that appear before edge $e$ and after edge $e$, respectively. We have $\phi(P_1)\cdot \phi(e)\cdot \phi(P_2)\in R$. It suffices to show that $\phi(P_1)\cdot \phi(e)\in R$. For the sake of contradiction, assume that $\phi(P_1)\cdot \phi(e)\notin R$. 

\begin{flushleft}
\textbf{Case 1}: $\phi(P_1)\cdot \phi(e) \in \tilde{G}$.
\end{flushleft}
\vspace{-0.1 cm}
As $P_1$'s edges' labels are only from $\tilde{G}$, we have $\phi(P_1)\in \tilde{G}$. So, $(\phi(P_1))^{-1}\cdot \big(\phi(P_1)\cdot \phi(e)\big) \in \tilde{G}$ and thus, $\phi(e)\in \tilde{G}$, which is not possible as $\phi(e)\in M\setminus \tilde{G}$, a contradiction.

\begin{flushleft} 
\textbf{Case 2}: $\phi(P_1)\cdot \phi(e) \in R'$ for some $\mathcal{R}$-class $R'$ different from $R$ and $\tilde{G}$.
\end{flushleft}
\vspace{-0.1 cm}
 Then, $\big(\phi(P_1)\cdot \phi(e) \big) \cdot \phi (P_2)\in R'$ (due to $R'$ being a right ideal), but then it contradicts the fact that $\phi(P_1)\cdot \phi(e) \cdot \phi (P_2)\in R$. 
\end{proof}
Now, we get the desired log-space algorithm by simply iterating over all edges $e$ with labels from $M\setminus \tilde{G}$, and then checking whether conditions (a) and (b) (as stated in the claim above) hold. Note that (a) can be checked in logspace using our earlier analysis of the case when the accepting set is a subset of the unit group $\tilde{G}$ (i.e., see Theorem~\ref{thm: logspace algo for accepting identity for any finite monoid}),
and (b) can be checked using the usual logspace algorithm for the Reachability problem on undirected unlabelled graphs by \cite{Rei04}. This proves Theorem~\ref{R-classes right ideal thm}. 
\end{proof}

\subsection{Dichotomy for $\BA_2$ and $\U$ Monoids}
\label{subsec: Dichotomy for BA_2 and U with Different Accepting Sets}
While we are unable to handle all aperiodic monoids, we prove the following theorem for the two special aperiodic monoids $\BA_2$ and $\U$ 
(refer to table \ref{tab:BA_2U} for the Cayley tables). Recall that any non-commutative aperiodic monoid, that is not in the class $\DA$ is divided by either $\BA_2$ or $\U$.

\begin{remark}
    \label{remark:BA2-U}
    When the monoid $M$ is $\BA_2$ (or $\U$ resp.), $\reachmf$ is in deterministic logspace with $F = M \setminus \{id\}$, where $id$ is the identity element of $M$. This can be done by checking if there is a walk from $s$ to $t$ in $G$ that uses at least one edge not labelled with the identity of $M$. Guess (iterate over all possible choices of) an edge $\{u,v\}$ in the desired walk which is not labelled with $id$. Then use the undirected reachability algorithm \cite{Rei04} to check if there is a path from $s$ to $u$ and $v$ to $t$ (of any yield). 
\end{remark}

\thmBAUsubset*

\begin{proof}  
    Fix the monoid $M =\BA_2$ (or $\U$ respectively) and the accepting set $F \subset M \setminus \{id\}$. Here we have $id = 1$. We will first consider the case where the accepting set $F$ is a singleton, and then proceed to other subsets. \\[-8mm]
   \subsubsection*{Case 1: $|F| = 1$.}$\reachmf$ is shown to be $\NL$-hard for $M = \BA_2$ (or $\U$ respectively) with $F=\{\alpha \beta\}$ by \cite{KSS14,RSS19}. Based on this, we now show the complexity for $\reachmf$ for other singleton accepting sets over $M$. 

    \begin{lemma}
    \label{thm:BA2beta_alpha}
        $\reachmf$ is $\NL$-hard for $M = \BA_2$ with $F=\{\beta \alpha\}$.
    \end{lemma}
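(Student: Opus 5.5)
The plan is to apply Theorem~\ref{idempotent hardness with certain conditions} directly, with the accepting idempotent $\mu=\beta\alpha$. From the Cayley table of $\BA_2$, $\beta\alpha\cdot\beta\alpha=\beta(\alpha\beta)\alpha$. Since $\beta\cdot\alpha\beta=\beta$ and $\beta\cdot\alpha=\beta\alpha$, this equals $\beta\alpha$, so $\beta\alpha$ is indeed an idempotent. It is also different from the absorber $0$.

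Next I factor $\mu=a\cdot b$ with $a=\beta$ and $b=\alpha$, so that $a\cdot b=\beta\alpha=\mu$. The table gives $\beta^2=0$ and $\alpha^2=0$, so both alternative conditions $a^2=0$ and $b^2=0$ hold. All hypotheses of Theorem~\ref{idempotent hardness with certain conditions} are then met: $\BA_2$ has absorber $0$, and $\mu\neq0$ is an idempotent factored as $ab$ with $b^2=0$. The theorem gives that $\reachmf$ with $F=\{\beta\alpha\}$ is $\NL$-hard.

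Concretely, the reduction underlying the theorem (the $b^2=0$ case) subdivides each directed edge $(u,v)$ of the reachability instance into $\{u,m_{uv}\}$ labelled $\beta$ and $\{m_{uv},v\}$ labelled $\alpha$. A directed $s$-$t$ path of length $\ell$ then yields $(\beta\alpha)^\ell=\beta\alpha$. Any walk that takes a ``wrong'' edge creates either a factor $\alpha\alpha=0$, which kills the yield, or an immediate backtrack $\beta\cdot\beta=0$. The backtrack case is already handled inside the theorem by the bypass argument, or is killed directly here since $\beta^2=0$ too.

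I expect no real obstacle. The only step needing care is checking the table entries ($\beta\alpha\cdot\beta\alpha$, $\alpha^2$, $\beta^2$) and noting that the theorem applies to the whole monoid $\BA_2$ with its absorber. An alternative is a symmetry argument: $\BA_2$ has an anti-automorphism swapping $\alpha$ and $\beta$, and reversing $s$ and $t$ reverses yields, which transfers the known hardness for $\{\alpha\beta\}$ to $\{\beta\alpha\}$. The direct application of the theorem is simpler, so that is the route I would take.
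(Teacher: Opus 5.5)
Your proposal is correct and is essentially the paper's proof: the paper also invokes Theorem~\ref{idempotent hardness with certain conditions} with $\mu=\beta\alpha$, $a=\beta$, $b=\alpha$. The only difference is that it cites the $a^2=\beta^2=0$ branch, while you use the $b^2=\alpha^2=0$ branch. Keep your observation that both squares vanish in $\BA_2$ as the actual justification: it means the first wrong edge of either type immediately creates a zero factor, so you do not need the theorem's bypass step for backtracks. That step, as written, does not handle a backtrack followed later by a backward traversal, because once the backtrack is removed the walk's yield need no longer be $\mu$.
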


    \begin{lemma}
    \label{thm:Ubeta_alpha}
        $\reachmf$ is $\NL$-hard for $M = \U$ with $F=\{\beta \alpha\}$.
    \end{lemma}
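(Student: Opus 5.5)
The plan is to derive Lemma~\ref{thm:Ubeta_alpha} directly from Theorem~\ref{idempotent hardness with certain conditions}, so the work is only to check that its hypotheses hold in $\U$ with $\mu=\beta\alpha$. Every check is read off the Cayley table of $\U$ in Table~\ref{tab:BA_2U}.

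\textbf{Step 1 (absorber).} The row and the column of $0$ consist entirely of $0$, so $0$ is the absorber of $\U$.

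\textbf{Step 2 ($\mu$ is a non-zero idempotent).} Set $\mu=\beta\alpha$. Clearly $\mu\neq 0$, and the entry in row $\beta\alpha$, column $\beta\alpha$ is $\beta\alpha$, so $\mu^2=\mu$.

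\textbf{Step 3 (the factorisation).} Take $a=\beta$ and $b=\alpha$. Row $\beta$, column $\alpha$ gives $a\cdot b=\beta\alpha=\mu$. Row $\beta$, column $\beta$ gives $a^2=\beta^2=0$.

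With these three facts, the case $a^2=0$ of Theorem~\ref{idempotent hardness with certain conditions} applies and gives $\NL$-hardness of $\reachmf$ for $M=\U$ and $F=\{\beta\alpha\}$. Note that $b^2=\alpha^2=\alpha\neq 0$ in $\U$, so the case $b^2=0$ is not available; this is why the choice $a=\beta$, $b=\alpha$ matters.

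For concreteness, the reduction from that proof is as follows. First delete the edges entering $s$ and leaving $t$. Then subdivide each directed edge $(u,v)$, labelling $\{u,m_{uv}\}$ with $\alpha$ and $\{m_{uv},v\}$ with $\beta$. Finally attach a pendant edge $\{s',s\}$ labelled $\beta$ and a pendant edge $\{t,t'\}$ labelled $\alpha$.

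A directed path of length $\ell$ in $G$ then yields an $s'$--$t'$ walk of yield $\beta(\alpha\beta)^{\ell}\alpha=(\beta\alpha)^{\ell+1}=\beta\alpha$. Conversely, a walk with a first ``wrong'' edge either contains the factor $\beta^2=0$ at a vertex of $G$, or contains an immediate backtrack through a midpoint, which can be excised. Hence a walk of yield $\beta\alpha$ forces a directed $s$--$t$ path in $G$.

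There is no real obstacle, since the theorem is already proved in general. The only point needing care is correctly identifying the order of the factors, which determines which of $a^2=0$ or $b^2=0$ is the case to use.
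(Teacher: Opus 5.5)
Your checks of the hypotheses are correct ($0$ is the absorber, $\beta\alpha$ is a non-zero idempotent, $\beta\cdot\alpha=\beta\alpha$, $\beta^2=0$). Deriving the lemma from Theorem~\ref{idempotent hardness with certain conditions} with $a=\beta$, $b=\alpha$ is exactly what the paper does. The gap is in the step that you, following the proof of that theorem, treat as routine: ``an immediate backtrack through a midpoint, which can be excised.'' In $\U$ a backtrack at the tail of an arc, $u\to m_{uw}\to u$, contributes $\alpha\cdot\alpha=\alpha\neq 0$, and $\beta\alpha\alpha\beta=\beta\alpha\beta=\beta\neq 0$. So such a backtrack can sit between a forward traversal into $u$ (ending with $\beta$) and a backward traversal out of $u$ (starting with $\beta$). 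Excising it creates a factor $\beta\beta$ in the shortened walk, but that factor is absent from the original walk, so no contradiction with $\phi(P)=\mu$ arises.

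In fact the reduction itself is wrong here. Let $G$ have arcs $(s,v),(v,w),(x,v),(x,t)$; no arc enters $s$ or leaves $t$, and there is no directed $s$--$t$ path. Yet in $G'$ the walk $s',s,m_{sv},v,m_{vw},v,m_{xv},x,m_{xt},t,t'$ has label word $\beta\,\alpha\,\beta\,\alpha\,\alpha\,\beta\,\alpha\,\alpha\,\beta\,\alpha$. Multiplying left to right with the Cayley table of $\U$ gives $\beta\alpha$. In $\BA_2$ the same step is harmless, since there $\alpha^2=0$ as well and every wrong edge kills the yield. Your own remark that $b^2=\alpha\neq 0$ in $\U$ is precisely where the argument breaks.

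To repair it, make every edge at an original vertex carry a label beginning with $\beta$, so that arriving at an original vertex with last letter $\beta$ is a dead end. For instance, replace each arc $(u,v)$ by a path $u,m^1_{uv},m^2_{uv},v$ whose three edges are labelled $\beta,\alpha,\beta\alpha$ in this order; no pendant edges are needed.

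\emph{Completeness.} A forward traversal of a gadget yields $\beta\cdot\alpha\cdot\beta\alpha=\beta\alpha$, so a directed path of length $\ell\ge 1$ gives yield $(\beta\alpha)^{\ell}=\beta\alpha$.

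\emph{Soundness.} Take a walk of non-zero yield. The step $m^1_{uv}\to u$ leaves a prefix yield in $\{\beta,\alpha\beta\}$. Such a prefix can neither be extended (multiplying by $\beta$ or $\beta\alpha$ gives $0$) nor accepted. An induction over prefixes then shows two things: every original vertex reached with prefix yield in $\{1,\alpha,\beta\alpha\}$ is directed-reachable from $s$, and whenever the walk is inside the gadget of $(u,v)$, the head $v$ is reachable. Since the walk ends at $t$ with yield $\beta\alpha$, $t$ is reachable.

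Alternatively, if you take the $\NL$-hardness of $F=\{\alpha\}$ over $\U$ from \cite{RSS19} as given, a single pendant edge $\{s',s\}$ labelled $\beta$ suffices. Returning to $s'$ forces a factor $\beta\beta=0$, and $\beta\cdot w=\beta\alpha$ holds if and only if $w=\alpha$.
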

    The proof of Lemmas \ref{thm:BA2beta_alpha} and \ref{thm:Ubeta_alpha} follows from Theorem  \ref{idempotent hardness with certain conditions}, where $\mu = \beta \alpha$ is an idempotent element, with factors $a = \beta$ and $b = \alpha$ 
    and     $a^2 = \beta^2 = 0$.

    Since the Cayley tables (see table~\ref{tab:BA_2U}) for $\BA_2$ and $\U$ are similar except that $\alpha^2 = 0$ in $\BA_2$ and $\alpha^2 = \alpha$ in $\U$, and the arguments of the lemmas below do not depend on it, they are combined into a single lemma. 
    
    \begin{lemma}
     \label{thm:BA2Ubeta}
        $\reachmf$ is $\NL$-hard for $M = \BA_2$ (or resp. $\U$ ) with $F=\{\beta\}$.
    \end{lemma}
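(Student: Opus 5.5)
We prove the lemma by a logspace reduction from $\reachmf$ with $F=\{\beta\alpha\}$, which is $\NL$-hard by Lemmas~\ref{thm:BA2beta_alpha} and~\ref{thm:Ubeta_alpha}. The instances we reduce from are those produced by the proof of Theorem~\ref{idempotent hardness with certain conditions} with $\mu=\beta\alpha$, $a=\beta$, $b=\alpha$ (so $a^2=\beta^2=0$). That construction yields a labelled graph $G'$ with a new source $s'$ and a new sink $t'$. The only edge at $s'$ is $\{s',s\}$, labelled $a=\beta$, and the only edge at $t'$ is $\{t,t'\}$, labelled $\alpha$. Moreover, the directed instance is positive iff $G'$ has an $s'$-$t'$ walk of yield $\beta\alpha$.

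The construction itself is short. From $G'$ we form $G''$ by adding one fresh vertex $t''$ and a single edge $\{t',t''\}$ labelled $\beta$. We then ask whether $G''$ has an $s'$-$t''$ walk of yield $\beta$. This is clearly logspace computable.

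In the forward direction, an $s'$-$t'$ walk of yield $\beta\alpha$ extended by the edge $\{t',t''\}$ has yield $\beta\alpha\cdot\beta=\beta$. This holds in both Cayley tables.

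For the reverse direction, take any $s'$-$t''$ walk $P$ in $G''$ of yield $\beta$. We argue in two steps.

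1. The walk visits $t''$ only at its end. Every intermediate visit to $t''$ would traverse $\{t',t''\}$ twice in a row, so the yield would contain the factor $\beta^2=0$. Since $0$ is an absorber, the yield would be $0\neq\beta$.

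2. We identify the yield of the prefix. Removing the last edge leaves an $s'$-$t'$ walk $Q$ in $G'$ with $\phi(Q)\cdot\beta=\beta$. The first edge of $Q$ is $\{s',s\}$, so $\phi(Q)\in\beta M=\{\beta,\beta\alpha,0\}$; this is read off from the $\beta$-row, which is identical in $\BA_2$ and $\U$. Checking the three candidates, $\beta\cdot\beta=0$ and $0\cdot\beta=0$, while $\beta\alpha\cdot\beta=\beta$. Hence $\phi(Q)=\beta\alpha$.

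Therefore the new instance is positive iff $G'$ has an $s'$-$t'$ walk of yield $\beta\alpha$, iff the original directed instance is positive.

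The one delicate point is excluding a prefix of yield $id$, since $id\cdot\beta=\beta$ as well. This is exactly why we reduce from the instances with the forced initial $\beta$-edge at $s'$, rather than from an arbitrary instance for $\{\beta\alpha\}$. If one prefers not to rely on the internal structure of that reduction, the same argument can be run directly from Directed Reachability using the subdivision gadget of Theorem~\ref{idempotent hardness with certain conditions}.
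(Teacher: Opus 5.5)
Your construction is the paper's: the $a=\beta$, $b=\alpha$ gadget from the proof of Theorem~\ref{idempotent hardness with certain conditions}, plus a pendant $\beta$-edge $\{t',t''\}$. Your local step is correct in both monoids and more careful than the paper's ``similar to'' remark: the prefix $Q$ has yield in $\beta M=\{0,\beta,\beta\alpha\}$, and only $\beta\alpha\cdot\beta=\beta$. For $\BA_2$ the proof is complete, since there $\alpha^2=0$ as well.

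The gap is the imported claim that the directed instance is positive iff $G'$ has an $s'$-$t'$ walk of yield $\beta\alpha$. This is false in $\U$, where $\alpha^2=\alpha$. Take the arcs $s\to v$, $u\to v$, $u\to t$, $v\to x$, so $t$ is not reachable from $s$. In your $G''$, consider the walk $s',s,m_{sv},v,m_{vx},v,m_{uv},u,m_{ut},t,t',t''$. Its label sequence is $\beta\alpha\beta\alpha\alpha\beta\alpha\alpha\beta\alpha\beta$. A word over $\{\alpha,\beta\}$ is $0$ in $\U$ iff it contains $\beta\beta$, and otherwise its value is fixed by its first and last letters; so this yield is $\beta$. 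The walk enters $v$ forwards, bounces on the $\alpha$-edge $\{v,m_{vx}\}$, and leaves $v$ backwards through the gadget of $u\to v$. The bounce contributes $\alpha\alpha=\alpha$ and separates the two $\beta$'s that would otherwise annihilate. So the composed reduction sends a negative instance to a positive one. Your closing remark about running the argument directly from directed reachability fails for the same reason.

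This is exactly where the proof of Theorem~\ref{idempotent hardness with certain conditions} breaks for this choice of $a,b$. Its Case~2 ``bypass'' deletes the $bb=\alpha\alpha$ detours, which can create a $\beta^2$ factor in the shortened walk that the original walk does not contain. No contradiction with the original yield then follows. The paper's proofs of Lemma~\ref{thm:Ubeta_alpha} and of this lemma inherit the defect, so citing them does not close the gap.

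A repair for $\U$ needs a gadget that blocks backward crossings whatever precedes them. For example, replace each arc $(x,y)$ by a path $x,w_{xy},z_{xy},y$ with labels $\beta$, $\alpha\beta$, $\alpha$, and attach $\{t,t''\}$ with label $\beta$. Every element of $\U$ other than $1$ and $0$ is determined by its first and last letter, and a product $pq$ is $0$ iff $p$ ends and $q$ begins with $\beta$. Both edges at $w_{xy}$ have labels ending in $\beta$, so the $\beta$-edge $\{w_{xy},x\}$ can never be taken from $w_{xy}$ with nonzero yield. Hence between consecutive visits to original vertices, a nonzero walk either crosses a gadget forwards or returns to the vertex it left. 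A nonzero $s$-$t''$ walk starts and ends with a $\beta$-edge, so its yield is exactly $\beta$, and such a walk exists iff $t$ is reachable from $s$. The same gadget also works for $\BA_2$.
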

    \begin{proof}
        We outline the proof idea since it is similar to the proof of Theorem \ref{idempotent hardness with certain conditions}
         with slight modifications. Fix the monoid $M$ to be $\BA_2$ (or resp. $\U$ ) and the accepting set $F=\{\beta\}$. The reduction is from a \textsc{Reach} instance $(G,s,t)$ to a $\reachmf$ instance $(G'(V',E'),\phi: E' \rightarrow \{\alpha,\beta\},s',t'')$. 
         The construction of $G'$ is similar to Theorem \ref{idempotent hardness with certain conditions}, where $\mu = \beta \alpha$ is the idempotent element, with factors $a = \beta$ and $b = \alpha$ and  $a^2=0$.
        Further, we introduce a new vertex $t''$ and add an edge from $t'$ (since there is an edge from $t$ to a new vertex $t'$ added with label $b$) to $t''$ with label $a=\beta$. 
        The correctness argument is similar to Theorem \ref{idempotent hardness with certain conditions}. 
    \end{proof} 

    $\reachmf$ is $\NL$-hard for $M = \BA_2$ (or resp. $\U$ ) with $F=\{\alpha\}$ and the argument is similar to the previous lemmas.
    For $F=\{0\}$, we reduce from an instance of $\overline{\textsc{Reach}}$ which we describe below. 
    
    \begin{lemma}
     \label{thm:BA2U_0}
        $\reachmf$ is $\NL$-hard for $M = \BA_2$ (or resp. $\U$ ) with $F=\{0\}$.
    \end{lemma}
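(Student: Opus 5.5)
The plan is to reduce from $\overline{\textsc{Reach}}$, using $\NL=\mathrm{co}\NL$ (Immerman--Szelepcs\'enyi), so that $\overline{\textsc{Reach}}$ is $\NL$-hard under logspace reductions. The reduction should produce a labelled graph in which an $s$-to-$t$ walk of yield $0$ exists exactly when $G$ has \emph{no} directed $s$-$t$ path. I am not certain this direction is the right one, and I flag below why a direct reduction from $\textsc{Reach}$ looks hopeless.

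\textbf{Starting point.} I would start from the graph $G'$ built in the proof of Theorem~\ref{idempotent hardness with certain conditions}, with $\mu=\beta\alpha$, $a=\beta$, $b=\alpha$, and with the extra pendant vertices $s',t'$. That proof classifies $s'$-$t'$ walks into two kinds:
\begin{itemize}
\item walks that faithfully simulate directed paths of $G$, with yield $\mu$;
\item walks that pick a first wrong edge of Case~1, whose yield contains $a^2=0$ and is therefore $0$.
\end{itemize}
Walks that only use Case~2 wrong edges were bypassed in that proof, and I must handle them here. In $\BA_2$ such a backtrack contributes $b^2=\alpha^2=0$. In $\U$ it contributes $\alpha^2=\alpha$, so the Case~2 analysis has to be redone for $\U$.

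\textbf{Key obstacle.} The presence of a zero walk is monotone in the graph: once any non-identity edge is reachable, one can usually backtrack over it and create a factor $x^2=0$. So a zero-yield walk can never certify the \emph{presence} of a path. The construction must therefore make $0$ \emph{unavoidable precisely in the bad case}. My plan is to design the gadget so that every $s$-$t$ walk has yield in $\{0,\nu\}$ for a fixed nonzero $\nu$, and to arrange it so that zero cannot arise from backtracking alone.

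To achieve this, I would label the edges using idempotents $\alpha\beta$ and $\beta\alpha$ and the identity. The relevant products are $\alpha\beta\cdot\alpha\beta=\alpha\beta$ and $\beta\alpha\cdot\beta\alpha=\beta\alpha$, so backtracking over such an edge is harmless. By contrast, $\alpha\beta\cdot\beta\alpha=0=\beta\alpha\cdot\alpha\beta$ when the two are adjacent without an intermediate edge. I would then add a parallel ``escape route'' from $s$ to $t$ whose yield becomes $0$ only if it is combined with a certificate that $t$'s copy cannot be entered with the nonzero label. The idea is to reuse the product-graph view of the proposition of \cite{RSS19} and to encode, layer by layer, the complement of the reachable set.

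\textbf{Order of steps.}
\begin{enumerate}
\item Fix the gadget labels and tabulate, from Table~\ref{tab:BA_2U}, every two-letter product that arises at a vertex.
\item Prove the invariant that the yield of any walk lies in $\{0,\nu\}$, by inspecting the first wrong edge as in Theorem~\ref{idempotent hardness with certain conditions}.
\item Prove the two directions of correctness.
\item Check that the construction is logspace computable.
\end{enumerate}

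Step~2, and in particular its interaction with the $\U$ identity $\alpha^2=\alpha$, is where I expect the argument to be most delicate. Failing that, I would try the alternative of showing that $\reachmf$ with $F=\{0\}$ and the case $F=\{\alpha\beta\}$ are interreducible in logspace. This would go by attaching a pendant edge labelled $\beta$ or $\alpha$ at $t$ and taking complements, so as to inherit hardness from the known result of \cite{KSS14,RSS19}.
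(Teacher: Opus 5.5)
Your ``key obstacle'' is correct, and it is decisive. The paper's proof is exactly the naive version you distrusted. It reuses the $\{\alpha,\beta\}$-labelled graph of Theorem~\ref{idempotent hardness with certain conditions}, reduces from $\overline{\textsc{Reach}}$, and argues that the yields are ``either $\beta\alpha$ or $0$'', so that swapping yes and no suffices. But a walk of yield $\beta\alpha$ does not exclude a walk of yield $0$. Whenever $G$ has a directed $s$-$t$ path, prefixing the good walk with $s'\to s\to s'$ (a factor $\beta^2=0$) gives a zero-yield walk. So no-instances of $\overline{\textsc{Reach}}$ are sent to yes-instances, and the paper's argument does not establish the lemma either.

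The gap in your proposal is that the repair is never made into a construction. The idempotent labels, the ``escape route'' and the ``certificate'' encoding the complement of the reachable set stay informal, and Step~2 is left open. This gap cannot be filled unless $\NL=\ComplexityFont{L}$, because $\reachmf$ with $F=\{0\}$ is in \ComplexityFont{L} for both $\BA_2$ and $\U$.

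To see this, record each $x\in M\setminus\{0,1\}$ by its first and last letter ($\alpha\mapsto(\alpha,\alpha)$, $\alpha\beta\mapsto(\alpha,\beta)$, etc.). The Cayley tables show that $xy=0$ iff $\mathrm{last}(x)=\mathrm{first}(y)$ in $\BA_2$, and iff $\mathrm{last}(x)=\mathrm{first}(y)=\beta$ in $\U$. Otherwise $xy$ has first letter $\mathrm{first}(x)$ and last letter $\mathrm{last}(y)$. By induction, a product of nonzero, non-identity elements is $0$ iff two consecutive factors multiply to $0$.

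Hence an $s$-$t$ walk of yield $0$ exists iff $t$ lies in the component $C$ of $s$ and one of the following holds:
\begin{itemize}
\item $C$ contains an edge labelled $0$; or
\item $C$ contains edges traversed as $u_1\to v_1$ and $u_2\to v_2$ whose labels multiply to $0$, with $v_1$ joined to $u_2$ by a possibly empty path of identity-labelled edges. These may be the same edge traversed back and forth.
\end{itemize}
Iterating over pairs of oriented edges and calling \cite{Rei04} decides this in logspace. For your idempotent gadget over $\BA_2$, the criterion collapses to ``$s,t$ connected and both $\alpha\beta$ and $\beta\alpha$ occur in the component''. That is a pure connectivity condition and cannot encode non-reachability.

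Your fallback fails for two reasons. First, a logspace reduction from the $\NL$-hard case $F=\{\alpha\beta\}$ to $F=\{0\}$ would give $\NL\subseteq\ComplexityFont{L}$. Second, $\reachmf$ is existential over walks, so complementing an instance does not exchange $F$ with $M\setminus F$. The statement, and with it the case $F=\{0\}$ of Theorem~\ref{thm:BA2Usubset}, cannot be proved without proving $\NL=\ComplexityFont{L}$.
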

    \begin{proof}
        We outline the proof idea. Fix the monoid $M =\BA_2$ (or resp. $\U$) and the accepting set $F=\{0\}$. Hence, $\overline{F} = \{1,\alpha,\beta,\alpha \beta,\beta \alpha\}$. We reduce the problem from an instance  of $\overline{\textsc{Reach}}$ $(G,s,t)$ to $\reachmf$ instance $(G'(V',E'),\phi: E' \rightarrow \{\alpha,\beta\},s',t'')$. We use the same construction as in the proof of Theorem \ref{idempotent hardness with certain conditions}. 
          We claim that there is a path from $s$ to $t$ in $G$ if and only if there is a walk from $s'$ to $t''$ in $G'$ with yield $\beta \alpha$ (we can choose any element from $\BA_2$ (or resp. $\U$ ) other than $id$ and modify the construction accordingly as described for Lemmas \ref{thm:BA2beta_alpha}, \ref{thm:Ubeta_alpha}, and \ref{thm:BA2Ubeta}, and the arguments will be similar). Observe that from Theorem \ref{idempotent hardness with certain conditions}, in the yes instance of the problem, we get the yield $\beta \alpha$, and in the no instance, the yield equals $0$. The yields are either  $\beta \alpha$ or $0$. But here, since the reduction is from $\overline{\textsc{Reach}}$, we get a yield of $0$ in the yes instance and a yield of $\beta \alpha$ in the no instance, and we can argue similarly to 
          Theorem \ref{idempotent hardness with certain conditions}.
 
    \end{proof}
    \vspace{-6mm}

        \subsubsection*{Case 2: $|F| \geq 2$.} 
        \begin{description}
        \item[Case 2(a): $0 \in F, |\overline{F}| \geq 1$.] Suppose $0 \in F$, $\overline{F} \subseteq \{1,\alpha,\beta,\alpha \beta, \beta \alpha\}$ and $\overline{F} \neq \{1\}$. 
        Pick any element $y \in \overline{F}, y \neq 1$. Use the corresponding singleton accepting set construction for $y$. Reduce the problem from a $\overline{\textsc{Reach}}$ instance to a $\reachmf$ instance. The arguments are similar to Lemma~\ref{thm:BA2U_0}.

         \item[Case 2(b): $0 \not\in F, |F| \geq 1$.] Suppose $0 \not\in F \implies 0 \in \overline{F}$, $F \subseteq \{1,\alpha,\beta,\alpha \beta, \beta \alpha\}$ and $F \neq \{1\}$. 
        Pick any element $y \in F, y \neq 1$. Use the corresponding singleton accepting set construction for $y$. Reduce the problem from a $\textsc{Reach}$ instance to a $\reachmf$ instance. The arguments are similar to Lemma~\ref{thm:BA2U_0}.
        \end{description} 

\end{proof}

\section{Discussion and Conclusion}
In this paper, we presented complexity-theoretic upper and lower bounds for  $\reachmf$ for various subclasses of monoids. The main technical ingredient is the study of the product graph's structure, which is established by exploiting the algebraic properties of the underlying monoid. The general goal of this work is to completely classify the difficulty of the labelled reachability problem based on the structure of the monoid from which the elements are used for labelling. While this target may be far away, one immediate question is whether one can establish a deterministic logspace algorithm for the reachability problem when the underlying monoid $M$ is a union-of-groups and the accepting subset $F$ is an arbitrary subset of $M$. \\

\noindent {\em Acknowledgements :} The authors thank the anonymous reviewers for their useful comments on a previous version of this paper. The first author's research is supported by PhD Fellowship from Ministry of Education, Government of India. The second author's work is supported by Institute Postdoctoral Fellowship, at IIT Madras.

\bibliographystyle{alpha}
\bibliography{references.bib}
\end{document}